\documentclass[11pt]{article}

\usepackage[margin=1in]{geometry}
\usepackage{amsmath, amssymb, amsthm}
\usepackage{graphicx}
\usepackage{tabularx}
\usepackage{tikz}
\usetikzlibrary{calc,decorations.pathreplacing}
\usepackage{booktabs}
\usepackage{multirow}
\usepackage{hyperref}
\usepackage{natbib}
\usepackage{algorithm}
\usepackage{enumitem}
\usepackage{subcaption}
\usepackage{times}
\usepackage{appendix}
\usepackage{placeins}

\usetikzlibrary{arrows.meta, shadows.blur, calc}
\usepackage{xcolor}
\definecolor{coldblue}{RGB}{214,234,248}
\definecolor{coldframe}{RGB}{52,120,180}
\definecolor{hotred}{RGB}{248,215,210}
\definecolor{hotframe}{RGB}{185,70,60}
\definecolor{neutralbg}{RGB}{240,245,240}
\definecolor{neutframe}{RGB}{90,135,100}
\definecolor{arrowcol}{RGB}{70,90,110}
\definecolor{ceilcol}{RGB}{25,95,165}
\definecolor{floorcol}{RGB}{180,45,35}
\definecolor{outdoororange}{RGB}{220,100,20}
\definecolor{indoorcyan}{RGB}{20,150,180}

\usepackage{nomencl}
\usepackage{multicol}
\makenomenclature

\renewcommand{\nompreamble}{\begin{multicols}{2}}
	\renewcommand{\nompostamble}{\end{multicols}}

\DeclareMathOperator*{\argmin}{arg\,min}
\DeclareMathOperator*{\argmax}{arg\,max}
\newcommand{\tsum}{\sum\limits}

\usepackage{pgffor}
\usepackage{setspace}

\title{Robust inference for cyclic-stress accelerated life tests under interval monitoring with lognormal lifetimes}
\author{
	María Jaenada\\
	\small Department of Statistics and O.R., UNED, Madrid, Spain
    \and Leandro Pardo \\
	\small Department of Statistics and O.R., Complutense University of Madrid, Spain
	\and
    Kiran Prajapat\thanks{Corresponding author. Email: kiranprajapat92@gmail.com, kiran.prajapat@newcastle.ac.uk (Kiran Prajapat)}\\
	\small School of Mathematics, Statistics and Physics, Newcastle University, Newcastle upon Tyne, UK
}\date{}

\begin{document}
	
	\maketitle
	\begin{abstract}

    Highly reliable products are often tested under accelerated conditions to provoke failures within a feasible timeframe. For products whose service life involves repeated alternation between two stress levels, such as automotive air-conditioners, batteries, and aerospace components, cyclic-stress accelerated life testing (CyALT) provides a more realistic loading profile than conventional accelerated tests. In practice, failures are often recorded only at scheduled inspection times, leading to interval-censored counts rather than exact lifetimes. Moreover, traditional maximum likelihood estimation is sensitive to data contamination, which is a genuine concern in small-sample industrial experiments. This paper develops robust inferential procedures for CyALT models with lognormal lifetimes under interval monitoring. Robust estimators are obtained by minimizing a weighted density power divergence (WDPD), leading to the weighted minimum density power divergence estimator (WMDPDE). We establish the asymptotic distribution of the WMDPDE, derive influence function expressions to characterize the robustness, and present asymptotic and bootstrap confidence intervals for important lifetime characteristics. A simulation study confirms that the WMDPDE provides substantial protection against outliers while retaining high efficiency under clean data. The methodology is illustrated through the analysis of an air-conditioner reliability dataset, demonstrating the practical advantages of robust inference in the CyALT framework.

	\end{abstract}
	\providecommand{\keywords}[1]{\textbf{Keywords:} #1}
	\keywords{Accelerated life testing; Cyclic-stress loading; Cumulative exposure model; Lognormal distribution; Density power divergence; Weighted Minimum density power divergence estimator; Robust estimation; Interval censoring.}\\[2pt]
    \noindent \textbf{MSC 2020:} 62F35, 62F12, 62N05, 90B25, 62F10
	
	\newtheorem{theorem}{Theorem}[section]
	\newtheorem{corollary}{Corollary}[theorem]
	\newtheorem{proposition}{Proposition}[section]
	\newtheorem{lemma}{Lemma}[section]
	\newtheorem{remark}{Remark}[section]
	\newtheorem{result}{Result}[section]
	
	\section{Introduction}
	
	Testing of highly reliable products under normal operating conditions is often impractical in the sense that units may run for years without a single failure, making it impossible to gather useful information on lifetime behavior within an affordable timeframe. Accelerated life testing (ALT) addresses this by exposing testing units to higher-than-normal stress levels to provoke earlier failures, with reliability at use conditions then estimated by extrapolation. There are three stress loading mechanisms that dominate the ALT literature: constant-stress, step-stress, and progressive-stress. In constant-stress ALTs, each unit is assigned to a single fixed stress level for the duration of the test \citep{meeker1998, bai1991}. Whereas step-stress ALTs increase the stress at pre-specified time points, producing more failures more quickly under comparable experimental budgets \citep{nelson1980, miller1983, bai1989}. Moreover, progressive-stress ALTs apply a continuously increasing stress during the experiment. Thorough treatments of the statistical models, planning, and analysis for all three ALT designs can be found in \cite{nelson1990} and \cite{meeker1998}.

    A common assumption underlying constant-stress and step-stress ALTs is that the stress profile remains constant within each assigned period. This is adequate for many products, but not for those whose actual service/working conditions involve repeated alternation between two stress levels. Automotive air-conditioners, for instance, cycle continuously between high-pressure and low-pressure states as the compressor and expansion valve switch on and off to expand and drop the pressure levels. Similar cyclic operating profiles arise in batteries (charge-discharge cycles), aerospace mechatronic components (day-night thermal cycling), and firefighter protective gear (repeated washing and drying). For such products, a cyclic-stress ALT (CyALT), in which each unit alternates periodically between a ceiling and a floor stress level, more faithfully reflects in-service loading while still providing the acceleration necessary for making the timely lifetime characteristics' inference \citep{wang2024cyclic}. \cite{kim2021optimal} initiated this line of work by developing optimal CyALT plans under lognormal lifetimes, Type-I censoring, and the cumulative exposure model, determining the common floor level and unit allocation proportions to minimize the asymptotic variance of the MLE of a lifetime quantile at the use condition. \cite{kim2023optimal} subsequently extended this framework to the complete data case, developing both optimal and compromise CyALT plans and providing practical sample size guidelines. More recently, \cite{zhang2024reliability} studied reliability analysis under cyclic ALTs for the log-location-scale family of distributions under censoring, and \cite{zhang2025reliability} considered reliability estimation under cyclic ALTs based on a scale family of distributions.

    In spite of these advances, there are two important gaps that remain in the CyALT literature. The first concerns the form of the observed data. In many industrial settings, continuous monitoring of each testing unit is not feasible; units can only be inspected at scheduled time points, so the data consist of failure counts within intervals rather than exact lifetimes. Such interval-monitored data are common in practice and have been studied carefully in the constant-stress and step-stress context \citep{gouno2001, balakrishnan2023exponential, balakrishnan2024gamma, balakrishnan2024weibull}, but no analogous treatment exists for CyALT models with lognormal lifetimes under interval monitoring.

    The second gap concerns robustness. Existing CyALT work has relied predominantly on the MLE. While it is asymptotically efficient under the assumed model, it is well known to be fragile in the presence of outliers or data contamination. In ALT experiments, where only a small number of units are tested under extreme conditions, even a handful of misrecorded or unusual observations can substantially distort parameter estimates and lead to unreliable reliability conclusions. The density power divergence (DPD) of \cite{basu1998robust} provides a principled robust alternative. Minimum density power divergence estimators (MDPDEs) reduce to the MLE when its tuning parameter is null. Moreover, MDPDEs become more and more robust as its tuning parameter increases, with only a modest and well-controlled loss in efficiency. DPD-based inference for interval-monitored step-stress ALTs has now been developed for the exponential ~ \citep{balakrishnan2023exponential}, lognormal \citep{balakrishnan2024lognormal}, gamma \citep{balakrishnan2024gamma}, and Weibull \citep{balakrishnan2024weibull} distributions, with \cite{balakrishnan2024competing} extending the framework to competing-risks settings. More recently, \cite{balakrishnan2025robust} developed robust divergence-based estimators for one-shot devices under cyclic ALTs. However, robust inference for CyALT models with lognormal lifetimes and interval monitoring remains unexplored, which is the focus of the present paper.

    The present paper tackles both of these gaps simultaneously by developing robust inferential procedures for CyALT models with lognormal lifetimes under interval monitoring. The main contributions are: (1) derivation of the likelihood function and the weighted MDPDE (WMDPDE) estimating equations for the interval-censored lognormal CyALT model; (2) establishment of the asymptotic distribution of the WMDPDE and the construction of asymptotic and bootstrap confidence intervals for the model parameters and key lifetime characteristics; (3) characterization of robustness through influence function analysis; and (4) a simulation study and an illustrative analysis of an automotive air-conditioner evaporator dataset  confirming the industry application advantages of the proposed approach.

    The paper is organized as follows. Section \ref{sec:industry_app} introduces the motivating example for an industry application. Section \ref{sec:model_description} describes the CyALT model and interval-monitoring setup under necessary life-testing assumptions. Section \ref{sec:WMDPDE} develops the WMDPDE and its asymptotic properties, including the confidence intervals. Section \ref{sec:lifetime_char_theory} presents important lifetime characteristics and their confidence intervals. Section \ref{sec:influence_func} derives the influence function. Section \ref{sec:sim_analysis} reports the simulation study. Section \ref{sec:real_analysis} analyses the evaporator dataset. Finally, the study is concluded in Section \ref{sec:conclusion}.

	\section{Motivation example}
	\label{sec:industry_app}
	This study is motivated by the analysis of an automotive air-conditioning system, which consists of four parts: a compressor, a condenser, an expansion valve, and an evaporator. The basic operating principle is the exchange of thermal energy between the condenser and the evaporator. The condenser, functioning as the heat emission component, converts high-pressure and high-temperature vapor into liquid through thermal exchange with air. In contrast, the evaporator performs the reverse role, and unlike the condenser, it is prone to freezing problems that reduce cooling efficiency. To prevent such freezing at the evaporator, a thermistor is used to continuously switch the compressor on and off. This repeated and cyclic operation of the compressor introduces mechanical stress, which may lead to cracks and leakage in the fins or tubes of the evaporator. 
	
	To aid understanding of this cyclic activity and the resulting need for a higher level of cyclic-stress loading in ALT, we provide a diagram in Figure \ref{fig:ac} illustrating the working principle of the air-conditioning system. This visual representation helps clarify how repeated compressor-to-evaporator cycles drive the motivation for applying cyclic-stress loading in reliability studies. Therefore, this study is particularly designed for any product that functions in a repetitive and cyclic manner. 

	\begin{figure}[htbp]
		\centering
		\resizebox{0.9\textwidth}{!}{%
			\begin{tikzpicture}[
    mybox/.style={
        draw=black, fill=white, line width=1pt,
        rounded corners=6pt,
        minimum width=3.6cm, minimum height=2.2cm,
        align=center, font=\small
    },
    arr/.style={-{Stealth[length=6pt,width=4pt]},
                line width=1pt, color=black,
                shorten >=2pt, shorten <=2pt},
    darr/.style={-{Stealth[length=5pt,width=3.5pt]},
                 line width=0.8pt, dashed,
                 shorten >=2pt, shorten <=2pt},
    statelabel/.style={font=\scriptsize\itshape,
                       fill=white, inner sep=2pt, align=center}
]


\node[mybox] (comp) at (0,0) {
    \textbf{Compressor}\\[3pt]
    Pressurises refrigerant\\to high-pressure gas\\[4pt]
    \textit{Ceiling stress} $s_{iC}$
};

\node[mybox] (cond) at (6,0) {
    \textbf{Condenser}\\[3pt]
    Releases heat to\\outdoor air;\\gas liquefies
};

\node[mybox] (expv) at (6,-4.6) {
    \textbf{Expansion Valve}\\[3pt]
    Drops pressure of\\the liquid sharply\\[4pt]
    \textit{Floor stress} $s_{iF}$
};

\node[mybox] (evap) at (0,-4.6) {
    \textbf{Evaporator}\\[3pt]
    Absorbs heat from\\indoor air;\\liquid evaporates to gas
};


\draw[arr] (comp.east) -- (cond.west)
    node[statelabel, midway, above=4pt] {High-pressure hot gas};

\draw[arr] (cond.south) -- (expv.north)
    node[statelabel, midway, right=4pt] {High-pressure\\liquid};

\draw[arr] (expv.west) -- (evap.east)
    node[statelabel, midway, below=4pt] {Low-pressure cold liquid};

\draw[arr] (evap.north) -- (comp.south)
    node[statelabel, midway, left=4pt] {Low-pressure\\gas};


\draw[darr] (cond.east) -- ++(1.2,0)
    node[right, font=\scriptsize, align=left]
        {Heat out\\to outdoor};

\draw[darr] (-1.2,-4.6) -- (evap.west)
    node[left, font=\scriptsize, align=right, xshift=-1.2cm]
        {Heat in\\from indoor};

\node[font=\scriptsize\itshape] at (3,-2.3)
    {(closed refrigerant loop)};

\end{tikzpicture}   
		}
		\caption{Closed refrigerant cycle of an air-conditioner (AC).}
		\label{fig:ac}
	\end{figure}

    For reliability analysis, a cycle test is frequently employed to capture the degradation mechanisms of the air-conditioning system. Moreover, the operating profile and failure mechanisms of compressors in automotive systems are similar to those in compressors used in electronic appliances, making the methodology broadly applicable. This work focuses on cycle-based accelerated stress loading tests to obtain reliability information for such mechanical and electronic products. Cyclic-stress conditions $(s_i, i=1,\ldots,R)$ that exceed the normal use stress condition $(s_0)$ are used to accelerate failures, enabling timely reliability inference. For example, when $R = 2$, the setup involves three levels: use condition $(s_0)$, low cyclic-stress condition $(s_1)$, and high cyclic-stress condition $(s_2)$ as illustrated in Figure~\ref{Fig_2}. Under such setups, statistical inferences are important, as these inferences are further used in designing such CyALT experiments.     
    
	\section{Model Description}
	\label{sec:model_description}

    We now formalize the CyALT framework. Section~\ref{subsec:test_cond} introduces the important notations and test conditions, and Section~\ref{subsec:model_assump} states the lifetime and stress-life assumptions.
    
	\subsection{Test conditions under CyALT}
	\label{subsec:test_cond}
	Suppose that $K$ identical items are put on an experiment under CyALT and the items are exposed to $R$ number of different cyclic-stress conditions; $s_1, s_2, \dots, s_R$  that are higher than the use cyclic-stress condition. Each $s_i$ is characterized by its floor and ceiling levels. So, $s_{im}$ denotes the $m$-th level of the $i$-th stress condition $s_i$, where $m = F$ for the floor and $m = C$ for the ceiling level. ALTs are conducted at $s_i, i = 1,2, \ldots,R$. In air-conditioner's case, we assume that acceleration is implemented by only increasing the ceiling pressure where as the floor pressure level remains unchanged during the ALT. Therefore, we assume that $s_{1F} = s_{2F} = \ldots = s_{RF} < s_{1C} < s_{2C} < \ldots < s_{RC}$.  For each $i$-th stress condition, $s_{iC}$ is maintained for $(100\tau)\%$ of each cycle and then $s_{iF}$ is maintained for the rest of each cycle. The lifetimes of the units are measured in cycles and, at each cyclic-stress condition, the ALT continues until a pre-specified censoring time $t_c$ (observed in cycle). The total number of test units $(K)$ is given, and units are allocated to each cyclic-stress condition $(s_i, i = 1,2, \dots ,R)$ such that 
	\begin{align*}
		K_i= \pi_i K,  \text{ with }  0 < \pi_i < 1 \text{ and } \sum_{i=1}^R \pi_i =1 .
	\end{align*}
	We assume that the floor and ceiling levels of the use condition (i.e., $s_{0F}$ and $s_{0C}$) are known in advance. Also, the floor and ceiling levels in other accelerated stress conditions (i.e., $s_{1F} = s_{2F} = \dots = s_{RF}$, $s_{1C}, s_{2C}, \dots, s_{RC}$) are prespecified by the experimenter in application scenario and so as the ceiling stress maintaining time $\tau$ .
	
	\begin{figure}[H]
		\centering
        \resizebox{0.98\textwidth}{!}{%
		\begin{tikzpicture}[x=2.5cm,y=1.2cm,>=stealth]
    \def\tauNum{0.375}   
			\def\tc{5}           
			\def\tauSym{\tau}
			\def\sZeroF{0.3}     
			\def\sZeroC{0.9}     
			\def\sF{1.2}         
			\def\sOneC{2.2}      
			\def\sTwoC{2.4}      
			\def\sRC{3.0}        
			
			\draw[->] (0,0) -- (\tc+0.5,0) node[below right] {$t$};
			\draw[->] (0,0) -- (0,3.5) node[above left] {$s$};
			
			\foreach \y/\lab in {
				\sZeroF/$s_{0F}$, 
				\sZeroC/$s_{0C}$, 
				\sF/\scriptsize{$s_{1F}=s_{2F}=\dots=s_{RF}$}, 
				\sOneC/$s_{1C}$, 
				\sTwoC/$s_{2C}$, 
				\sRC/$s_{RC}$
			} {
				\draw (0,\y) -- (0.05,\y);
				\node[left] at (0,\y) {\lab};
			}
			
			\foreach \k in {0,1} {
				\draw[dotted,thick] (\k,\sOneC) -- (\k+\tauNum-0.02,\sOneC)
				-- (\k+\tauNum-0.02,\sF)
				-- (\k+1-0.02,\sF)
				-- (\k+1-0.02,\sOneC);
			}
			
			\foreach \k in {0,1} {
				\draw[dash dot,thick] (\k,\sTwoC) -- (\k+\tauNum-0.015,\sTwoC)
				-- (\k+\tauNum,\sF)
				-- (\k+1-0.015,\sF)
				-- (\k+1-0.015,\sTwoC);
			}
			
			\foreach \k in {0,1} {
				\draw[very thick] (\k,\sRC) -- (\k+\tauNum,\sRC)
				-- (\k+\tauNum,\sF)
				-- (\k+1,\sF)
				-- (\k+1,\sRC);
			}
			
			\draw[dashed] (2,\sOneC)-- (\tc-1,\sOneC);
			\draw[dashed] (2,\sTwoC) -- (\tc-1,\sTwoC);
			\draw[dashed] (2,\sRC)   -- (\tc-1,\sRC);
			\draw[dashed] (2,\sZeroC)-- (\tc-1,\sZeroC);
			
			\foreach \k in {0,1} {
				\draw[thick] (\k,\sZeroC) -- (\k+\tauNum,\sZeroC)
				-- (\k+\tauNum,\sZeroF)
				-- (\k+1,\sZeroF)
				-- (\k+1,\sZeroC);
			}
			
			\foreach \x/\lbl in {1.2/$IT_1$,2.3/$IT_2$,3/$IT_3$, 3.8/$\dots\dots$} {
				\draw[thick,black] (\x,-0.05) -- (\x,0.05);
				\node[black,below=2pt] at (\x,0) {\lbl};
			}
			\draw[thick,black] (\tc,-0.05) -- (\tc,3.2);
			\node[black,below=2pt] at (\tc,0) {$IT_L=t_c$};
			
			\draw[<->, thick] (0,\sZeroC-0.25) -- (\tauNum,\sZeroC-0.25) 
			node[midway,below=2pt] {$\tauSym$};
			\draw[<->, thick] (\tauNum,\sZeroF+0.25) -- (1,\sZeroF+0.25) 
			node[midway,above=2pt] {$1-\tauSym$};
			\draw[<->, thick] (0,\sZeroF-1.25) -- (1,\sZeroF-1.25) 
			node[midway,above=2pt] {1 Cycle};
			
			\draw[dotted,thick] 
			(\tc-1,\sOneC) -- (\tc-1+\tauNum-0.02,\sOneC)
			-- (\tc-1+\tauNum-0.02,\sF)
			-- (\tc-0.02,\sF)
			-- (\tc-0.02,\sOneC);
			\draw[dash dot,thick] 
			(\tc-1,\sTwoC) -- (\tc-1+\tauNum-0.015,\sTwoC)
			-- (\tc-1+\tauNum,\sF)
			-- (\tc-0.015,\sF)
			-- (\tc-0.015,\sTwoC);
			\draw[very thick] 
			(\tc-1,\sRC) -- (\tc-1+\tauNum,\sRC)
			-- (\tc-1+\tauNum,\sF)
			-- (\tc,\sF)
			-- (\tc,\sRC);
			\draw[thick] 
			(\tc-1,\sZeroC) -- (\tc-1+\tauNum,\sZeroC)
			-- (\tc-1+\tauNum,\sZeroF)
			-- (\tc,\sZeroF)
			-- (\tc,\sZeroC);
			
			\node[] at (1.5,\sOneC) {$s_1$};
			\node[] at (1.5,\sTwoC) {$s_2$};
			\node[] at (1.5,\sRC) {$s_R$};
\end{tikzpicture}}
		\caption{Cyclic-stress ALT with $R$ accelerated cyclic-stress conditions: dotted = $s_1$ (floor $s_{1F}$, ceiling $s_{1C}$), thick dark = $s_2$ (floor $s_{2F}$, ceiling $s_{2C}$), dash-dotted = $s_R$ (floor $s_{RF}$, ceiling $s_{RC}$), and solid line = use stress (floor $s_{0F}$, ceiling $s_{0C}$).}
		\label{Fig_2}
	\end{figure}
	
	In addition, we consider $L$ pre-fixed inspection times (in cycles), $0 < IT_1  < IT_2  < \dots < IT_L = t_c$, where the number of failures at each inspection time under all the cyclic-stress conditions is recorded. Observed failures from intermittently-monitored items are grouped from all the R cyclic-stress conditions as a count of failures within each inspected interval. Let $n_{ij}$ be the number of failed items at the $i$-th stress condition during the $j$-th in interval of the inspection and $n_{i,L+1}$ represents the number of units exposed to $i$-th stress condition that survive until the censoring time $t_c$. (see Figure \ref{Fig_3}). Let $\mathcal{D}$ denote the observed count data, \textit{i.e.}, $ \mathcal{D} =\{ n_{ij}, i = 1, \ldots, R, j =1, \dots, L, n_{1,L+1}, \dots, n_{R,L+1}\}$. 
	
	\begin{figure}[H]
		\centering
		\vspace{1em}
        \resizebox{0.9\textwidth}{!}{%
		\begin{tikzpicture}[x=2.5cm,y=0.8cm,>=stealth]
			\def\tc{4.85}  
			\def\ITs{{0.5,1.1,2.1,3}}  
			\def\L{3}    
			
			\draw[->] (0.5,0) -- (\tc+0.4,0) node[below right] {$t$};
			
			\foreach \x/\lbl in {0.5/$0$,1.2/$IT_1$,2.3/$IT_2$,3/$IT_3$,3.8/$\dots\dots$,4.2/$IT_{L-1}$,4.85/$IT_L=t_c$} {
				\draw[thick,black] (\x,-0.05) -- (\x,0.05);
				\node[black,below=2pt] at (\x,0) {\lbl};
				
				\draw[thick,black] (\tc,-0.05) -- (\tc,5.2);
				
			}
			\foreach \i/\xstart/\xend in {1/0.5/1.2,2/1.2/2.3,3/2.3/3,4/4.2/4.85} {
				\draw[decorate,decoration={brace,amplitude=6pt}] 
				(\xstart,0.1) -- (\xend,0.1);
			}
			\node[above=1em] at (0.001*\tc/\L,0) {Stress Cond. 1};
			\node[above=1em] at ($ (0.5,0)!0.5!(1.2,0) $) {$n_{11}$};
			\node[above=1em] at ($ (1.2,0)!0.5!(2.3,0) $) {$n_{12}$};
			\node[above=1em] at ($ (2.3,0)!0.5!(3,0) $) {$n_{13}$};
			\node[above=1em] at ($ (4.2,0)!0.5!(4.85,0) $) {$n_{1L}$};
			\node[above=1em] at ($ (4.85,0)!0.5!(5.5,0) $) {$n_{1,L+1}$};
			
			\node[above=2.5em] at (0.001*\tc/\L,0) {Stress Cond. 2};
			\node[above=2.5em] at ($ (0.5,0)!0.5!(1.2,0) $) {$n_{21}$};
			\node[above=2.5em] at ($ (1.2,0)!0.5!(2.3,0) $) {$n_{22}$};
			\node[above=2.5em] at ($ (2.3,0)!0.5!(3,0) $) {$n_{23}$};
			\node[above=2.5em] at ($ (4.2,0)!0.5!(4.85,0) $) {$n_{2L}$};
			\node[above=2.5em] at ($ (4.85,0)!0.5!(5.5,0) $) {$n_{2,L+1}$};
			
			\node[above=5em] at (0.001*\tc/\L,0) {\vdots};
			
			\node[above=7.5em] at (0.001*\tc/\L,0) {Stress Cond. R};
			\node[above=7.5em] at ($ (0.5,0)!0.5!(1.2,0) $) {$n_{R1}$};
			\node[above=7.5em] at ($ (1.2,0)!0.5!(2.3,0) $) {$n_{R2}$};
			\node[above=7.5em] at ($ (2.3,0)!0.5!(3,0) $) {$n_{R3}$};
			\node[above=7.5em] at ($ (4.2,0)!0.5!(4.85,0) $) {$n_{RL}$};
			\node[above=7.5em] at ($ (4.85,0)!0.5!(5.5,0) $) {$n_{R,L+1}$};
\end{tikzpicture}
        }
		\caption{Observed data on the number of failures during an interval monitored a cyclic-stress accelerated life testing experiment.}
		\label{Fig_3}
	\end{figure}
    
	\subsection{Model assumptions}
	\label{subsec:model_assump}
	We assume the following lifetime model in this paper: at any stress level $s_{im}$, that the lifetime of an unit follows a log-normal distribution with a scale parameter $\zeta(s_{im})$ and shape parameter $\sigma$. We assume that the lifetime characteristics of an item depends on the assigned stress value only through the scale parameter, whereas the shape parameter $\sigma$ is set free from the stress level. Then, its PDF is given by
	\begin{align*}
		f(x, \zeta(s_{im}), \sigma)	= \frac{1}{\sqrt{2\pi} \sigma x} \exp \left( -\frac{1}{2\sigma^2} \big\{ \ln (x) - \ln(\zeta(s_{im})) \big\}^2 \right). 
	\end{align*}  
	Furthermore, the relationship between stress and life is quantified through the scale parameter $\zeta(s_{im})$. Let $v_{im}$ denote the original physical stress variable (for example, voltage, temperature, or pressure) and let $g(~\cdot~)$ be a monotone function chosen according to the nature of the stress, for instance, $g(v) = v$ for a linear model, $g(v) = \ln (v)$ for the inverse power law model that is appropriate to mechanical stress, or $g(v) = 1/v$ for the Arrhenius model for thermal stress  \citep{nelson1990}. The scale parameter is then linked to $v_{im}$ through the log-linear relationship
	\begin{align}
		\ln( \zeta(v_{im})) = \gamma_0 + \gamma_1\, g(v_{im}),
		\label{eq:stress_life_original}
	\end{align}
	where $\gamma_0 \in \mathbb{R}$ and $\gamma_1 \in \mathbb{R}$, with the sign of $\gamma_1$ depending on whether $g(\cdot)$ is increasing or decreasing in the stress variable. To work on a common scale across different physical stress types, we define the standardized stress
	\begin{align}
		s_{im} = \frac{g(v_{im}) - g(v_{0F})}{g(v_{hC}) - g(v_{0F})},
		\label{eq:standardisation}
	\end{align}
	where $v_{0F}$ is the floor stress at the use condition and $v_{hC}$ is the ceiling stress at the highest accelerated condition. Under this standardization, equation \eqref{eq:stress_life_original} reduces to
	\begin{align}
		\ln( \zeta(s_{im}) )= \alpha_0 + \alpha_1 s_{im},
		\label{eq:stress_life}
	\end{align}
	where $\alpha_0 = \gamma_0 + \gamma_1\, g(v_{0F})$, $\alpha_1 = \gamma_1\,(g(v_{hC}) - g(v_{0F}))$, and $\alpha_1 < 0$ since higher standardized stress always corresponds to a shorter lifetime. Throughout this paper we work with equation \eqref{eq:stress_life} without loss of generality with $\alpha_0 \in \mathbb{R}$ and $\alpha_1 < 0$ as the unknown constants/parameters.
	The cumulative exposure (CE) model is assumed for the effect of changing stress levels. It should be noted that the scale parameter under the $i$-the stage of cyclic loading is a function of time. Under each of its cycles, the scale parameter is given by  
	\begin{align*}
		\zeta_i(x) = \begin{cases}
			\zeta(s_{iC}), &\text{ if} ~ 0<x\leq \tau\\
			\zeta(s_{iF}), &\text{ if} ~ \tau < x \leq 1.
		\end{cases}
	\end{align*}
	Therefore, cumulative exposure at $t$ number of cycles (see page 507, \cite{nelson2009accelerated}) is given by
	\begin{align*}
		\epsilon_i(t) = \int_0^t \frac{1}{\zeta_i(u)} du= t \int_0^1 \frac{1}{\zeta_i(u)} du = t \Big\{ \frac{\tau}{\zeta(s_{iC })} + \frac{1-\tau}{\zeta(s_{iF })}  \Big\}.
	\end{align*}
	The cumulative effect under $s_i$ cyclic-stress loading for a unit with $t$ cycles is given by
	\begin{align}
		\epsilon_i(t) =  t \Big\{ \frac{\tau}{\zeta(s_{iC })} + \frac{1-\tau}{\zeta(s_{iF })}  \Big\}.
	\end{align}  
	Therefore, the CDF at $t$ cycles for an item that is exposed to $i$-th stage of the cyclic-stress condition due to the cumulative exposure model is given by (page 508, \cite{nelson2009accelerated})
	\begin{align*}
		F_i(t) & = F(\epsilon_i(t), 1,  \sigma))  \\
		&	= \Phi\Bigg(\frac{\ln (\epsilon_i(t))}{\sigma}\Bigg),   \text{where~} \Phi ~ \text{is CDF of} ~ N(0,1).&\\
		F_i(t) &  = \Phi\Bigg(\frac{\ln (t) + \ln\big( \tau \eta_{iC }  + (1-\tau) \eta_{iF } \big) }{\sigma}\Bigg)
	\end{align*}
	with $\eta_{im } = 1/ \zeta(s_{im})= \exp \big\{-\alpha_0-\alpha_1 s_{im}\big\}, ~ \text{for} ~ m=C,F$.
	Hence, the PDF at $t$ cycles under $i$-th cyclic-stress loading is given by
	\begin{align*}
		f_i(t) & = \frac{1}{ \sigma t } \phi\Bigg(\frac{\ln (t) + \ln\big( \tau \eta_{iC }  + (1-\tau) \eta_{iF } \big) }{\sigma}\Bigg)
	\end{align*}
	where $\phi$ is the PDF of $N(0,1)$.
	
	Let $T_{ik}$ be the lifetime of the unit k under $s_i$. Assume $X_{ik} = \ln(T_{ik})$ as the log-lifetime random variable. Then clearly, 
	\begin{align}
		X_{ik} \sim N(\mu_i, \sigma)
	\end{align}
	with location $\mu_i= -\ln(\tau \eta_{iC } + (1-\tau) \eta_{iF })$ and standard deviation $\sigma$. Suppose that $T_{ik},~ i=1,2,\dots,R,~ k = 1,2,\dots, K_i$ is a random sample.  
	Note that the parameter vector of interest is $(\alpha_0, \alpha_1, \sigma)^\top$ with $\alpha_0 \in \mathbb{R}$, $\alpha_1 < 0$, and $\sigma > 0$, which we represent by $\boldsymbol{\theta} \in \boldsymbol{\Theta}$, where $\boldsymbol{\Theta}= \mathbb{R} \times \mathbb{R}^- \times \mathbb{R}^+$ denotes the parameter space. Furthermore, the probability of failure in the $j$-th interval under the $i$-th cyclic-stress condition is given by  
	\begin{align}
		\label{p_ij}
		p_{ij}(\boldsymbol{\theta}) = & \text{Pr} \left( IT_{j-1} < T_{ik} < IT_{j} \right) & \nonumber\\
		= & \text{Pr} \left( \ln(IT_{j-1}) < \ln(T_{ik}) < \ln(IT_{j}) \right)  & \nonumber\\
		= & \Phi\left( \frac{\ln(IT_{j})-\mu_i}{\sigma} \right) - \Phi \left( \frac{\ln(IT_{j-1})-\mu_i}{\sigma} \right), \quad i=1,2,\dots,R, \quad j = 1,2, \dots, L, &
	\end{align} and the probability of survival at the $i$-cyclic-stress condition is
	\begin{align}
		\label{p_i_l+1}
		p_{i,L+1}(\boldsymbol{\theta}) = 1 - \Phi\left( \frac{\ln(t_c)-\mu_i}{\sigma} \right), \quad i=1,2,\dots, R,
	\end{align}

    The probabilities $p_{ij}(\boldsymbol{\theta})$ and $p_{i,L+1}(\boldsymbol{\theta})$ defined in \eqref{p_ij} and \eqref{p_i_l+1} form the building blocks for all subsequent inference. In the next section, we use these to construct robust estimators of $\boldsymbol{\theta} = (\alpha_0, \alpha_1, \sigma)^\top$ based on the DPD.

	\section{Minimum Density Power Divergence Estimator}
	\label{sec:WMDPDE}
	
	In this section, we construct robust estimators for the CyALT model with 
	lognormal lifetimes using the DPD approach proposed 
	by \cite{basu1998robust}. Recall from Section \ref{sec:model_description} that under the $i$-th 
	cyclic-stress condition, the observed data consist of $n_{ij}$ failures in the 
	$j$-th inspection interval $(IT_{j-1}, IT_j)$, for $j = 1, 2, \ldots, L$, and 
	$n_{i,L+1}$ units that survive beyond the censoring time $t_c$. These $L+1$ 
	categories are mutually exclusive and exhaustive, so that $\sum_{j=1}^{L+1} n_{ij} = n_i$ for each $i = 1, 2, \ldots, R$.
	
	Under this setup, the observed counts at each stress condition follow a 
	multinomial distribution. Since the $R$ stress groups are independent, the 
	likelihood function is a product of $R$ multinomial likelihoods:
	\begin{align}
		\mathcal{L}(\boldsymbol{\theta}| \mathcal{D}) \propto 
		\prod_{i=1}^{R} \left( 
		\prod_{j=1}^{L} \big[ p_{ij}(\boldsymbol{\theta}) \big]^{n_{ij}} 
		\right) 
		\big[ p_{i,L+1}(\boldsymbol{\theta}) \big]^{n_{i,L+1}},
		\label{eq:likelihood}
	\end{align}
	where $p_{ij}(\boldsymbol{\theta})$ and $p_{i,L+1}(\boldsymbol{\theta})$ are as defined in 
	Section \ref{subsec:model_assump}. The log-likelihood function is given by 
	\begin{align}
		\ln\left( \mathcal{L}(\boldsymbol{\theta}| \mathcal{D}) \right) = \text{const} +  
		\sum_{i=1}^{R} \left( 	\sum_{j=1}^{L} n_{ij} \ln\left( p_{ij}(\boldsymbol{\theta}) \right) 
		+ n_{i,L+1} \ln\left( p_{i,L+1}(\boldsymbol{\theta}) \right) \right),
		\label{eq:log-likelihood}
	\end{align} and the maximum likelihood estimator (MLE) of $\boldsymbol{\theta}$ is given by
	\begin{align}
		\widehat{\boldsymbol{\theta}}_{\text{MLE}} = \argmax_{\boldsymbol{\theta} \in \boldsymbol{\Theta}}\; \ln \left(\mathcal{L}(\boldsymbol{\theta} | \mathcal{D}) \right).
		\label{eq:MLE}
	\end{align}
	Now, we are going to see the relation between the MLE and the minimization of the Kullback-Leibler divergence between two appropriate probability distributions. We define the theoretical probability vector for the group $i$ as
	\begin{align*}
		\mathbf{p}_i(\boldsymbol{\theta}) = \left( p_{i1}(\boldsymbol{\theta}),\, p_{i2}(\boldsymbol{\theta}),\, \ldots,\, 
		p_{iL}(\boldsymbol{\theta}),\, p_{i,L+1}(\boldsymbol{\theta}) \right)^\top, 
		\quad i = 1, 2, \ldots, R,
	\end{align*}
	and the corresponding empirical probability vector as
	\begin{align*}
		\widehat{\mathbf{p}}_i = \left( \frac{n_{i1}}{K_i},\, \frac{n_{i2}}{K_i},\, \ldots,\, 
		\frac{n_{iL}}{K_i},\, \frac{n_{i,L+1}}{K_i} \right)^\top, 
		\quad i = 1, 2, \ldots, R,
	\end{align*}
	where $K_i$ is the number of units assigned to the $i$-th stress condition. Note that $\sum_{j=1}^{L+1} p_{ij}(\boldsymbol{\theta}) = 1$ and $\sum_{j=1}^{L+1} n_{ij}/K_i = 1$ for each $i$, so both vectors lie in the same $(L+1)$-dimensional simplex, as required for the DPD. Here, $\sum_{j=1}^{L+1}$ includes the survival category for compactness, with $p_{ij}$ and $n_{ij}$ at $j = L+1$ denoting $p_{i,L+1}(\boldsymbol{\theta})$ and $n_{i,L+1}$ respectively. 
	
	The Kullback-Leibler divergence between the probability vectors $\widehat{\mathbf{p}}_i$ and $\mathbf{p}_i(\boldsymbol{\theta})$ is given by
	\begin{align}
		d_{\text{K-L}}(\widehat{\mathbf{p}}_i, \mathbf{p}_i(\boldsymbol{\theta})) 
		=  \sum_{j=1}^{L+1} \frac{n_{ij}}{K_i} 
		\ln\left( \frac{n_{ij}/K_i}{p_{ij}(\boldsymbol{\theta})} \right),
		\label{eq:KL}
	\end{align}
	For more detail about the Kullback-Leibler divergence measure, see \cite{pardo2006}. 
	
	Based on the Kullback-Leibler divergence, we shall consider the Weighted Kullback-Leibler Divergence (WKLD). The WKLD between the probability vectors $\widehat{\mathbf{p}}_i$ and $\mathbf{p}_i(\boldsymbol{\theta})$, $i=1,\dots, R$, is defined by
	\begin{align}
		\mathcal{W}(\boldsymbol{\theta}) 
		= \sum_{i=1}^{R} \frac{K_i}{K}  
		d_{\text{K-L}}(\widehat{\mathbf{p}}_i, \mathbf{p}_i(\boldsymbol{\theta})).
		\label{eq:WKLD}
	\end{align}
	This concept has been previously considered in \cite{balakrishnan2019robust, balakrishnan2020robust, balakrishnan2021robust, balakrishnan2023robust, balakrishnan2023competing}. Now, we are going to see the relation between the log-likelihood and the WKLD. we have
	\begin{align*}
		\mathcal{W}(\boldsymbol{\theta}) = & \frac{1}{K}  \sum_{i=1}^{R}   \sum_{j=1}^{L+1} n_{ij}
		\ln\left( \frac{n_{ij}/K_i}{p_{ij}(\boldsymbol{\theta})} \right)  & \\
		= & C -  \frac{1}{K}  \sum_{i=1}^{R}  \sum_{j=1}^{L+1} n_{ij}
		\ln\left( p_{ij}(\boldsymbol{\theta}) \right), &
	\end{align*}
	where $C = \frac{\text{const}}{K} + \frac{1}{K}  \sum\limits_{i=1}^{R}  \sum\limits_{j=1}^{L+1}  n_{ij} 
	\ln\left( n_{ij}/K_i\right) $ and it is a constant that does not depend on $\boldsymbol{\theta}$. Therefore, we have 
	\begin{align}
		\mathcal{W}(\boldsymbol{\theta}) = & C - \frac{1}{K}\ln\mathcal{L}(\boldsymbol{\theta};\mathcal{D}),&
		\label{eq:WKLD_loglik}
	\end{align}
	and then the MLE can be defined by
	\begin{align}
		\widehat{\boldsymbol{\theta}}_{\mathrm{MLE}} 
		= (\widehat{\alpha}_0, \widehat{\alpha}_1, \widehat{\sigma}) 
		= \argmin_{\boldsymbol{\theta} \in \boldsymbol{\Theta}}\, 
		\mathcal{W}(\boldsymbol{\theta}),
		\label{eq:MLE_WKLD}
	\end{align}
	where $\mathcal{W}(\boldsymbol{\theta})$ is the WKLD defined in~equation \eqref{eq:WKLD}. 
	
	Although the MLE defined in~equation \eqref{eq:MLE_WKLD} is asymptotically efficient under the assumed model, it is a BAN (Best Asymptotically Normal) estimator under the assumed model, it is well known to be sensitive to data contamination and model misspecification. Even a small proportion of outlying or misrecorded observations can lead to severely biased estimates and unreliable conclusions. This motivates the use of a robust alternative that retains high efficiency under clean data while providing protection against contamination. Based on the relation in equation \eqref{eq:WKLD_loglik}, several families of estimators have been proposed in the statistical literature by replacing the Kullback-Leibler divergence with a more robust divergence measure. In this paper we consider the DPD of \cite{basu1998robust} in order to obtain the MDPDE. 
	
	The DPD between the probability vectors $\widehat{\mathbf{p}}_i$ and $\mathbf{p}_i(\boldsymbol{\theta})$, for tuning parameter $\beta > 0$, is given below
	\begin{align}
		d_\beta(\widehat{\mathbf{p}}_i, \mathbf{p}_i(\boldsymbol{\theta})) 
		&= \sum_{j=1}^{L+1}\left\{
		p_{ij}(\boldsymbol{\theta})^{1+\beta} 
		- \left(1+\frac{1}{\beta}\right) \frac{n_{ij}}{K_i}
		p_{ij}(\boldsymbol{\theta})^{\beta}
		+ \frac{1}{\beta}
		\left(\frac{n_{ij}}{K_i}\right)^{1+\beta}
		\right\}. 
		\label{eq:DPD_full}
	\end{align}
	It is an easy exercise to see that
	\begin{align*}
		\lim_{\beta \downarrow 0}
		d_\beta(\widehat{\mathbf{p}}_i, \mathbf{p}_i(\boldsymbol{\theta})) 
		= d_{\text{K-L}}(\widehat{\mathbf{p}}_i, \mathbf{p}_i(\boldsymbol{\theta})).
	\end{align*}
	We can see that the term
	\begin{align*}
		\frac{1}{\beta}\sum_{j=1}^{L+1}
		\left(\frac{n_{ij}}{K_i}\right)^{1+\beta}
	\end{align*}
	does not depend on $\boldsymbol{\theta}$ and therefore has no importance in the minimization of $d_\beta(\widehat{\mathbf{p}}_i, \mathbf{p}_i(\boldsymbol{\theta}))$ 
	with respect to $\boldsymbol{\theta}$. Therefore, we define the weighted density power divergence as
	\begin{align}
		H_\beta(\boldsymbol{\theta}) = & \sum_{i=1}^{R} \frac{K_i}{K} d^*_\beta(\widehat{\mathbf{p}}_i, \mathbf{p}_i(\boldsymbol{\theta}))  &
		\label{eq:DPD}
	\end{align}
	where
	\begin{align}
		d^*_\beta(\widehat{\mathbf{p}}_i, \mathbf{p}_i(\boldsymbol{\theta})) 
		&= \sum_{j=1}^{L+1} p_{ij}(\boldsymbol{\theta})^{1+\beta}
		- \left(1+\frac{1}{\beta}\right)
		\sum_{j=1}^{L+1} \frac{n_{ij}}{K_i} p_{ij}(\boldsymbol{\theta})^{\beta}.
		\label{eq:DPD_reduced}
	\end{align}
	The weighted minimum density power divergence estimator (WMDPDE) with tuning parameter 
	$\beta$ is then defined as
	\begin{align}
		\widehat{\boldsymbol{\theta}}_\beta = (\widehat{\alpha}_{0,\beta}, \widehat{\alpha}_{1,\beta}, \widehat{\sigma}_\beta)  =\argmin_{\boldsymbol{\theta} \in \boldsymbol{\Theta}}\; H_\beta(\boldsymbol{\theta}),
		\label{eq:WMDPDE}
	\end{align} where
	\begin{align}
		H_\beta(\boldsymbol{\theta}) = & \frac{1}{K} \sum_{i=1}^{R} \left\{  
		\sum_{j=1}^{L+1} K_ip_{ij}(\boldsymbol{\theta})^{1+\beta}
		- \left(1+\frac{1}{\beta}\right) \sum_{j=1}^{L+1} n_{ij} p_{ij}(\boldsymbol{\theta})^{\beta} \right\}.
		\label{eq:DPD_full_reduced}
	\end{align}
	\begin{remark}
		As $\beta \to 0$, the DPD converges to the Kullback-Leibler divergence, and 
		$\widehat{\boldsymbol{\theta}}_\beta$ reduces to the MLE $\widehat{\boldsymbol{\theta}}_{\text{MLE}}$ 
		in~equation \eqref{eq:MLE}. 
	\end{remark}
	
	The WMDPDE $\widehat{\boldsymbol{\theta}}_\beta$ is obtained by minimising $H_\beta(\boldsymbol{\theta})$ 
	in~equation \eqref{eq:DPD_full_reduced} with respect to $\boldsymbol{\theta} = (\alpha_0, \alpha_1, \sigma)^\top$. 
	Setting the partial derivatives to zero gives the estimating equations
	\begin{align}
		\frac{\partial H_\beta(\boldsymbol{\theta})}{\partial \alpha_0} 
		&= (1+\beta) \sum_{i=1}^{R} \frac{K_i}{K} \left\{
		\sum_{j=1}^{L+1}  p_{ij}(\boldsymbol{\theta})^{\beta-1}
		\left[p_{ij}(\boldsymbol{\theta}) - \frac{n_{ij}}{K_i} \right]
		\frac{\partial p_{ij}(\boldsymbol{\theta})}{\partial \alpha_0} \right\}  = 0,& \nonumber\\
		\frac{\partial H_\beta(\boldsymbol{\theta})}{\partial \alpha_1} 
		&= (1+\beta) \sum_{i=1}^{R} \frac{K_i}{K}  \left\{
		\sum_{j=1}^{L+1} p_{ij}(\boldsymbol{\theta})^{\beta-1}
		\left[p_{ij}(\boldsymbol{\theta}) - \frac{n_{ij}}{K_i} \right]
		\frac{\partial p_{ij}(\boldsymbol{\theta})}{\partial \alpha_1} \right\} = 0, \nonumber\\
		\frac{\partial H_\beta(\boldsymbol{\theta})}{\partial \sigma} 
		&= (1+\beta) \sum_{i=1}^{R} \frac{K_i}{K}  \left\{
		\sum_{j=1}^{L+1} p_{ij}(\boldsymbol{\theta})^{\beta-1}
		\left[p_{ij}(\boldsymbol{\theta}) - \frac{n_{ij}}{K_i} \right]
		\frac{\partial p_{ij}(\boldsymbol{\theta})}{\partial \sigma} \right\} = 0.
		\label{eq:score}
	\end{align}
	To evaluate these equations, we require the partial derivatives of 
	$p_{ij}(\boldsymbol{\theta})$ and $p_{i,L+1}(\boldsymbol{\theta})$ with respect to $\alpha_0$, 
	$\alpha_1$, and $\sigma$. Recall from the equations \eqref{p_ij} and \eqref{p_i_l+1} in Subsection \ref{subsec:model_assump}, that 
	\begin{align*}
		p_{ij}(\boldsymbol{\theta}) =  \Phi\left( \frac{\ln(IT_{j})-\mu_i}{\sigma} \right) - \Phi \left( \frac{\ln(IT_{j-1})-\mu_i}{\sigma} \right),  ~ \text{and} ~ p_{i,L+1}(\boldsymbol{\theta}) = 1 - \Phi\left( \frac{\ln(t_c)-\mu_i}{\sigma} \right), 
	\end{align*} Therefore, to proceed further, let us define the standardized log-inspection time
	\begin{align}
		a_{i,j} = \frac{\ln(IT_j) - \mu_i}{\sigma}, 
		\quad j = 0, 1, \ldots, L,
		\label{eq:aij}
	\end{align}
	with $a_{i,0} = -\infty$ so that $\Phi(a_{i,0}) = 0$, and let $\phi(\cdot)$ 
	denote the standard normal PDF. Since 
	$\mu_i = -\ln(\tau\eta_{iC}+(1-\tau)\eta_{iF})$ and 
	$\eta_{im} = \exp\{-\alpha_0-\alpha_1 s_{im}\}$ for $m\in\{C,F\}$, 
	straightforward differentiation gives
	\begin{align*}
		\frac{\partial\mu_i}{\partial\alpha_0} = 1, \qquad
		\frac{\partial\mu_i}{\partial\alpha_1} = \bar{s}_i, 
	\end{align*}
	where 	
	\begin{align}
		\bar{s}_i = \frac{\tau s_{iC}\eta_{iC}+(1-\tau)s_{iF}\eta_{iF}}
		{\tau\eta_{iC}+(1-\tau)\eta_{iF}},
		\label{eq:sbar}
	\end{align}
	and consequently
	\begin{align*}
		\frac{\partial a_{i,j}}{\partial\alpha_0} = -\frac{1}{\sigma}, \qquad
		\frac{\partial a_{i,j}}{\partial\alpha_1} = -\frac{\bar{s}_i}{\sigma}, \qquad
		\frac{\partial a_{i,j}}{\partial\sigma}   = -\frac{a_{i,j}}{\sigma}.
	\end{align*}
	Differentiating $p_{ij}(\boldsymbol{\theta}) = \Phi(a_{i,j})-\Phi(a_{i,j-1})$ 
	for $j=1,\ldots,L$ and $p_{i,L+1}(\boldsymbol{\theta}) = 1-\Phi(a_{i,L})$ 
	with respect to $\alpha_0$, $\alpha_1$, and $\sigma$ gives
	\begin{align}
		\frac{\partial p_{ij}(\boldsymbol{\theta})}{\partial\alpha_0} 
		&= -\frac{1}{\sigma}\big[\phi(a_{i,j})-\phi(a_{i,j-1})\big],  &
		\frac{\partial p_{i,L+1}(\boldsymbol{\theta})}{\partial\alpha_0} 
		&= \phantom{-}\frac{\phi(a_{i,L})}{\sigma}, 
		\label{eq:dp_a0}\\
		\frac{\partial p_{ij}(\boldsymbol{\theta})}{\partial\alpha_1} 
		&= -\frac{\bar{s}_i}{\sigma}\big[\phi(a_{i,j})-\phi(a_{i,j-1})\big], &
		\frac{\partial p_{i,L+1}(\boldsymbol{\theta})}{\partial\alpha_1} 
		&= \phantom{-}\frac{\bar{s}_i\,\phi(a_{i,L})}{\sigma},
		\label{eq:dp_a1}\\
		\frac{\partial p_{ij}(\boldsymbol{\theta})}{\partial\sigma} 
		&= -\frac{1}{\sigma}\big[a_{i,j}\phi(a_{i,j})-a_{i,j-1}\phi(a_{i,j-1})\big], \quad &
		\frac{\partial p_{i,L+1}(\boldsymbol{\theta})}{\partial\sigma} 
		&= \phantom{-}\frac{a_{i,L}\,\phi(a_{i,L})}{\sigma}.
		\label{eq:dp_sigma}
	\end{align}
	Substituting~equation \eqref{eq:dp_a0}--equation \eqref{eq:dp_sigma} into~equation \eqref{eq:score}, 
	the three equations can be written compactly in matrix form as
	\begin{align}
		\sum_{i=1}^{R} \frac{K_i}{K}\,
		\mathbf{W}_i(\boldsymbol{\theta})^\top\,
		\mathbf{D}_i^{(\beta-1)}(\boldsymbol{\theta})\,
		\big[\mathbf{p}_i(\boldsymbol{\theta}) - \widehat{\mathbf{p}}_i\big] 
		= \mathbf{0}_3,
		\label{eq:estimeq}
	\end{align}
	where $\mathbf{0}_3$ is the $3$-dimensional null vector, $\mathbf{D}_i(\boldsymbol{\theta}) = 
	\mathrm{diag}(p_{i1}(\boldsymbol{\theta}),\ldots,p_{iL}(\boldsymbol{\theta}),p_{i,L+1}(\boldsymbol{\theta}))$ 
	is the $(L+1)\times(L+1)$ diagonal matrix of fitted probabilities, 
	$\mathbf{D}_i^{(r)}(\boldsymbol{\theta})$ denotes the diagonal matrix with entries 
	$p_{ij}(\boldsymbol{\theta})^{r}$, and $\mathbf{W}_i(\boldsymbol{\theta})$ is the 
	$(L+1)\times 3$ matrix whose $j$-th row is the gradient vector
	\begin{align*}
		\mathbf{w}_{ij}(\boldsymbol{\theta}) = \nabla_{\boldsymbol{\theta}} p_{ij}(\boldsymbol{\theta}) = 
		\left(\frac{\partial p_{ij}(\boldsymbol{\theta})}{\partial\alpha_0},\;
		\frac{\partial p_{ij}(\boldsymbol{\theta})}{\partial\alpha_1},\;
		\frac{\partial p_{ij}(\boldsymbol{\theta})}{\partial\sigma}\right)^\top,
		\quad j = 1,\ldots,L+1,
	\end{align*}
	as given explicitly in~equation \eqref{eq:dp_a0}-equation \eqref{eq:dp_sigma}, with the 
	convention that $p_{i,L+1}(\boldsymbol{\theta})$ corresponds to row $j = L+1$.

	
	\noindent The following theorem establishes the asymptotic distribution of 
	$\widehat{\boldsymbol{\theta}}_\beta$.
	
	\begin{theorem}\label{thm:asymp}
		Under the CyALT model of Section \ref{sec:model_description}, suppose that 
		$\mathbf{J}_\beta(\boldsymbol{\theta}_0)$ is non-singular and that the functions 
		$p_{i,j}(\boldsymbol{\theta})$ are twice continuously differentiable in a 
		neighbourhood of $\boldsymbol{\theta}_0$. Then $\widehat{\boldsymbol{\theta}}_\beta$ is a 
		consistent estimator of $\boldsymbol{\theta}_0 = (\alpha_{0,0}, \alpha_{1,0}, 
		\sigma_0)^\top$, the true value of $\boldsymbol{\theta}$, and
		\begin{align}
			\sqrt{K} \left( \widehat{\boldsymbol{\theta}}_\beta - \boldsymbol{\theta}_0 \right)
			\;\xrightarrow{\mathcal{L}}\;
			N_3\!\Big(\mathbf{0}_3,\;
			\mathbf{J}_\beta(\boldsymbol{\theta}_0)^{-1}\,
			\mathbf{K}_\beta(\boldsymbol{\theta}_0)\,
			\mathbf{J}_\beta(\boldsymbol{\theta}_0)^{-1}
			\Big),
			\quad \text{as } K \to \infty,
			\label{eq:asymp}
		\end{align}
		where
		\begin{align}
			\mathbf{J}_\beta(\boldsymbol{\theta}) &= \sum_{i=1}^{R} \frac{K_i}{K}
			\mathbf{W}_i(\boldsymbol{\theta})^\top\,
			\mathbf{D}_i^{(\beta-1)}(\boldsymbol{\theta})\,
			\mathbf{W}_i(\boldsymbol{\theta}),
			\label{eq:Jbeta}\\[4pt]
			\mathbf{K}_\beta(\boldsymbol{\theta}) &= \sum_{i=1}^{R} \frac{K_i}{K}\,
			\mathbf{W}_i(\boldsymbol{\theta})^\top
			\Big[
			\mathbf{D}_i^{(2\beta-1)}(\boldsymbol{\theta})
			- \mathbf{D}_i^{(\beta)}(\boldsymbol{\theta})\,\mathbf{1}_{L+1}\,
			\mathbf{1}_{L+1}^\top\,\mathbf{D}_i^{(\beta)}(\boldsymbol{\theta})
			\Big]
			\mathbf{W}_i(\boldsymbol{\theta}),
			\label{eq:Kbeta}
		\end{align}
		with $\mathbf{D}_i^{(r)}(\boldsymbol{\theta})$ denoting the $(L+1)\times(L+1)$ 
		diagonal matrix with $j$-th diagonal entry $p_{i,j}(\boldsymbol{\theta})^r$, and 
		$\mathbf{1}_{L+1}$ the $(L+1)$-dimensional vector of ones.
	\end{theorem}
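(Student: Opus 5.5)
The plan is the standard M-estimation argument for minimum divergence estimators over finitely many multinomial cells, in the style of Basu et al. and the interval-monitored step-stress papers cited earlier. Throughout, $K\to\infty$ with $K_i/K=\pi_i$ fixed. Write $\boldsymbol{\Psi}_K(\boldsymbol{\theta})=\nabla_{\boldsymbol{\theta}} H_\beta(\boldsymbol{\theta})/(1+\beta)$, which by \eqref{eq:estimeq} equals $\sum_i \frac{K_i}{K}\mathbf{W}_i^\top\mathbf{D}_i^{(\beta-1)}[\mathbf{p}_i(\boldsymbol{\theta})-\widehat{\mathbf{p}}_i]$.

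\textbf{Consistency.} Since the $\widehat{\mathbf{p}}_i$ are multinomial proportions, $\widehat{\mathbf{p}}_i\to\mathbf{p}_i(\boldsymbol{\theta}_0)$ almost surely. Because $H_\beta$ is linear in $\widehat{\mathbf{p}}_i$ with bounded coefficients $p_{ij}^\beta\le 1$, it converges uniformly in $\boldsymbol{\theta}$ to
\[
H_\beta^*(\boldsymbol{\theta})=\sum_i\pi_i\Bigl\{\sum_j p_{ij}(\boldsymbol{\theta})^{1+\beta}-\bigl(1+\tfrac1\beta\bigr)\sum_j p_{ij}(\boldsymbol{\theta}_0)p_{ij}(\boldsymbol{\theta})^\beta\Bigr\}.
\]
Up to a constant, $H_\beta^*$ is the weighted DPD $\sum_i\pi_i d_\beta(\mathbf{p}_i(\boldsymbol{\theta}_0),\mathbf{p}_i(\boldsymbol{\theta}))$. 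This is nonnegative and vanishes only when $\mathbf{p}_i(\boldsymbol{\theta})=\mathbf{p}_i(\boldsymbol{\theta}_0)$ for all $i$.

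Identifiability then reduces to recovering $\boldsymbol{\theta}$ from the $\mu_i$ and $\sigma$. With $L\ge2$ cells, the map $(\mu_i,\sigma)\mapsto\mathbf{p}_i$ is injective, and $R\ge2$ distinct ceiling levels pin down $(\alpha_0,\alpha_1)$. Locally this is equivalent to nonsingularity of $\mathbf{J}_\beta(\boldsymbol{\theta}_0)$, which is assumed.

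I would then argue consistency locally, following Basu et al. (1998). On a small sphere around $\boldsymbol{\theta}_0$, a second-order expansion shows that $H_\beta$ exceeds $H_\beta(\boldsymbol{\theta}_0)$ with probability tending to one, because the Hessian tends to $(1+\beta)\mathbf{J}_\beta$, which is positive definite. Hence there is a consistent root of $\boldsymbol{\Psi}_K=\mathbf{0}$. This step is the main obstacle: the parameter space is unbounded, so global consistency of the argmin needs either compactness or a check that $H_\beta^*$ stays away from its minimum at infinity. The boundary behaviour is where all $p_{ij}$ degenerate, and there $H_\beta^*$ is strictly larger. I would state the result for the local minimizer, as is customary.

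\textbf{Hessian limit.} Differentiating $\boldsymbol{\Psi}_K$ gives
\[
\sum_i\frac{K_i}{K}\Bigl[\mathbf{W}_i^\top\mathbf{D}_i^{(\beta-1)}\mathbf{W}_i+\sum_j\bigl(p_{ij}-\widehat p_{ij}\bigr)\nabla\bigl(p_{ij}^{\beta-1}\mathbf{w}_{ij}\bigr)\Bigr].
\]
At $\boldsymbol{\theta}_0$ the second term is $o_p(1)$ since $\widehat p_{ij}-p_{ij}\to0$. Twice continuous differentiability then gives uniform convergence of the Hessian on shrinking neighbourhoods to $\mathbf{J}_\beta(\boldsymbol{\theta}_0)$.

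\textbf{Asymptotic normality.} A Taylor expansion of $\boldsymbol{\Psi}_K(\widehat{\boldsymbol{\theta}}_\beta)=\mathbf{0}$ about $\boldsymbol{\theta}_0$ yields
\[
\sqrt K(\widehat{\boldsymbol{\theta}}_\beta-\boldsymbol{\theta}_0)=\mathbf{J}_\beta(\boldsymbol{\theta}_0)^{-1}\sqrt K\sum_i\frac{K_i}{K}\mathbf{W}_i^\top\mathbf{D}_i^{(\beta-1)}(\widehat{\mathbf{p}}_i-\mathbf{p}_i)+o_p(1).
\]
By the multinomial CLT, $\sqrt{K_i}(\widehat{\mathbf{p}}_i-\mathbf{p}_i)\to N(\mathbf{0},\mathbf{D}_i-\mathbf{p}_i\mathbf{p}_i^\top)$, independently across $i$. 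The factors $\sqrt K\cdot K_i/K=\sqrt{\pi_i}\sqrt{K_i}$ make the $i$-th group contribute covariance
\[
\pi_i\,\mathbf{W}_i^\top\mathbf{D}_i^{(\beta-1)}\bigl(\mathbf{D}_i-\mathbf{p}_i\mathbf{p}_i^\top\bigr)\mathbf{D}_i^{(\beta-1)}\mathbf{W}_i.
\]
Using $\mathbf{D}_i^{(\beta-1)}\mathbf{p}_i=\mathbf{D}_i^{(\beta)}\mathbf{1}_{L+1}$, this equals $\pi_i\mathbf{W}_i^\top[\mathbf{D}_i^{(2\beta-1)}-\mathbf{D}_i^{(\beta)}\mathbf{1}\mathbf{1}^\top\mathbf{D}_i^{(\beta)}]\mathbf{W}_i$, which sums over $i$ to $\mathbf{K}_\beta(\boldsymbol{\theta}_0)$. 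Slutsky's theorem then gives the sandwich limit \eqref{eq:asymp}.
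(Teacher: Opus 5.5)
Your proof is correct, but it takes a different route from the paper's. The paper redoes no asymptotics. It invokes the general theorem of Ghosh and Basu (2013) for minimum DPD estimators under independent non-homogeneous observations. That theorem gives the sandwich with $\mathbf{J}_\beta=\sum_i\sum_j\frac{K_i}{K}\mathbf{u}_{ij}\mathbf{u}_{ij}^\top p_{ij}^{\beta+1}$ and $\mathbf{K}_\beta=\mathbf{J}_{2\beta}-\sum_i\frac{K_i}{K}\boldsymbol{\xi}_{i,\beta}\boldsymbol{\xi}_{i,\beta}^\top$, written in the log-gradients $\mathbf{u}_{ij}=\mathbf{w}_{ij}/p_{ij}$. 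The paper's entire proof is the algebra converting these into the $\mathbf{W}_i^\top\mathbf{D}_i^{(r)}\mathbf{W}_i$ forms.

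You instead work directly from the finite-cell structure:
\begin{itemize}
\item uniform convergence of $H_\beta$, from the bounded coefficients $p_{ij}^\beta\le 1$;
\item the sphere argument for a consistent root;
\item the Hessian limit and linearisation;
\item the per-group multinomial CLT with scaling $\sqrt{K}\,K_i/K=\sqrt{\pi_i}\sqrt{K_i}$.
\end{itemize}
The identity $\mathbf{D}_i^{(\beta-1)}\mathbf{p}_i=\mathbf{D}_i^{(\beta)}\mathbf{1}_{L+1}$ then yields $\mathbf{K}_\beta$ in exactly the stated form, without passing through $\mathbf{J}_{2\beta}$ and $\boldsymbol{\xi}_{i,\beta}$.

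The paper's route is shorter, but it never verifies the regularity conditions of the cited theorem for this model and says nothing about identifiability. Yours is self-contained. It shows that $\mathbf{K}_\beta$ is just the multinomial covariance $\mathbf{D}_i-\mathbf{p}_i\mathbf{p}_i^\top$ sandwiched by $\mathbf{W}_i^\top\mathbf{D}_i^{(\beta-1)}$, and it addresses identifiability.

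On consistency you reach the same place as the paper. The cited theorem also yields only a consistent sequence of roots, not consistency of the global argmin $\widehat{\boldsymbol{\theta}}_\beta$ named in the statement. If you want the global version, uniform convergence plus a well-separated minimum of $H^*_\beta$ suffices. Three points to tighten along the way:
\begin{itemize}
\item The boundary of $\Theta$ is not only where cells degenerate. As $\alpha_1\to0^-$ nothing degenerates, but $\mu_1=\mu_2$ in the limit, which identifiability excludes.
\item Your identifiability claim for $(\alpha_0,\alpha_1)$ holds because $\mu_1-\mu_2$ is strictly monotone in $\alpha_1$ on $\alpha_1<0$.
\item Nonsingularity of $\mathbf{J}_\beta(\boldsymbol{\theta}_0)$ implies local identifiability but is not equivalent to it.
\end{itemize}
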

	
	\begin{proof}
		The asymptotic distribution follows from the general theory of \cite{ghosh2013robust} for minimum density power divergence 	estimators under independent but not identically distributed observations. In the CyALT setting, the $R$ stress conditions 	constitute independent groups with sample sizes $K_1, \ldots, K_R$ (satisfying $\sum_{i=1}^R K_i= K$). At each stress level $i$, the data consist of failure counts $(n_{i1}, \ldots, n_{i,L+1})$ following a multinomial distribution with $K_i$ trials and a probability vector $\mathbf{p}_i(\boldsymbol{\theta}) = (p_{i1}(\boldsymbol{\theta}), \ldots, p_{i,L+1} (\boldsymbol{\theta}))^\top$ of dimension $L+1$. From \cite{ghosh2013robust}, the asymptotic distribution of the WMDPDE is given by 
		\begin{align}
			\sqrt{K} \left( \widehat{\boldsymbol{\theta}}_\beta - \boldsymbol{\theta}_0 \right) 
			\xrightarrow{\mathcal{L}} N_3 \left( \mathbf{0}_3, \mathbf{J}_\beta(\boldsymbol{\theta}_0)^{-1} \mathbf{K}_\beta(\boldsymbol{\theta}_0)  \mathbf{J}_\beta(\boldsymbol{\theta}_0)^{-1}  \right), \quad \text{as } K \to \infty,
		\end{align} 
		where $\mathbf{J}_\beta(\boldsymbol{\theta})$ and $\mathbf{K}_\beta(\boldsymbol{\theta})$ are defined in terms of the log-probability gradients.  
		
		Specifically, let $\mathbf{u}_{ij}(\boldsymbol{\theta}) = \nabla_{\boldsymbol{\theta}} \ln p_{ij}(\boldsymbol{\theta})$ denote the gradient of the log-probability. Then 
		\begin{align}
			\mathbf{J}_\beta(\boldsymbol{\theta}) &= \sum_{i=1}^{R} \sum_{j=1}^{L+1} \frac{K_i}{K}
			\mathbf{u}_{ij}(\boldsymbol{\theta}) 
			\mathbf{u}_{ij}(\boldsymbol{\theta})^\top 
			p_{ij}(\boldsymbol{\theta})^{\beta+1}, \\
			\mathbf{K}_\beta(\boldsymbol{\theta}) &= \mathbf{J}_{2\beta}(\boldsymbol{\theta}) 
			- \sum_{i=1}^{R} \frac{K_i}{K} \boldsymbol{\xi}_{i,\beta}(\boldsymbol{\theta}) \boldsymbol{\xi}_{i,\beta}(\boldsymbol{\theta})^\top,
		\end{align} 
		where
		\begin{align}
			\boldsymbol{\xi}_{i,\beta}(\boldsymbol{\theta}) &= \sum_{j=1}^{L+1} \mathbf{u}_{ij}(\boldsymbol{\theta}) p_{ij}(\boldsymbol{\theta})^{\beta+1}.
		\end{align}
		
		\noindent 
		Since $\mathbf{u}_{ij}(\boldsymbol{\theta}) = \frac{1}{p_{ij}(\boldsymbol{\theta})} \nabla_{\boldsymbol{\theta}} p_{ij}(\boldsymbol{\theta}) = \frac{\mathbf{w}_{ij}(\boldsymbol{\theta})}{p_{ij}(\boldsymbol{\theta})}$, we have
		\begin{align}
			\mathbf{u}_{ij}(\boldsymbol{\theta}) \mathbf{u}_{ij}(\boldsymbol{\theta})^\top p_{ij}(\boldsymbol{\theta})^{\beta+1} 
			&= \frac{\mathbf{w}_{ij}(\boldsymbol{\theta})}{p_{ij}(\boldsymbol{\theta})} \frac{\mathbf{w}_{ij}(\boldsymbol{\theta})^\top}{p_{ij}(\boldsymbol{\theta})} p_{ij}(\boldsymbol{\theta})^{\beta+1} \nonumber\\
			&= \mathbf{w}_{ij}(\boldsymbol{\theta}) \mathbf{w}_{ij}(\boldsymbol{\theta})^\top p_{ij}(\boldsymbol{\theta})^{\beta-1}.
		\end{align}
		Therefore,
		\begin{align}
			\mathbf{J}_\beta(\boldsymbol{\theta}) &= \sum_{i=1}^{R} \sum_{j=1}^{L+1} \frac{K_i}{K}
			\mathbf{w}_{ij}(\boldsymbol{\theta}) \mathbf{w}_{ij}(\boldsymbol{\theta})^\top p_{ij}(\boldsymbol{\theta})^{\beta-1} \nonumber \\
			&= \sum_{i=1}^{R} \frac{K_i}{K}
			\mathbf{W}_i(\boldsymbol{\theta})^\top \mathbf{D}_i^{(\beta-1)}(\boldsymbol{\theta}) \mathbf{W}_i(\boldsymbol{\theta}),
		\end{align}
		where $\mathbf{W}_i(\boldsymbol{\theta})$ is the $(L+1) \times 3$ matrix with rows $\mathbf{w}_{ij}(\boldsymbol{\theta})$, and $\mathbf{D}_i^{(\beta-1)}(\boldsymbol{\theta}) = \mathrm{diag}(p_{i1}(\boldsymbol{\theta})^{\beta-1}, \ldots, p_{i,L+1}(\boldsymbol{\theta})^{\beta-1})$.
		
		Similarly, for $\boldsymbol{\xi}_{i,\beta}(\boldsymbol{\theta})$,
		\begin{align}
			\boldsymbol{\xi}_{i,\beta}(\boldsymbol{\theta}) &= \sum_{j=1}^{L+1} \mathbf{u}_{ij}(\boldsymbol{\theta}) p_{ij}(\boldsymbol{\theta})^{\beta+1} \nonumber\\
			&= \sum_{j=1}^{L+1} \frac{\mathbf{w}_{ij}(\boldsymbol{\theta})}{p_{ij}(\boldsymbol{\theta})} p_{ij}(\boldsymbol{\theta})^{\beta+1} \nonumber\\
			&= \sum_{j=1}^{L+1} \mathbf{w}_{ij}(\boldsymbol{\theta}) p_{ij}(\boldsymbol{\theta})^{\beta} \nonumber \\
			&= \mathbf{W}_i(\boldsymbol{\theta})^\top \mathbf{D}_i^{(\beta)}(\boldsymbol{\theta}) \mathbf{1}_{L+1},
		\end{align}
		where $\mathbf{D}_i^{(\beta)}(\boldsymbol{\theta}) = \mathrm{diag}(p_{i1}(\boldsymbol{\theta})^{\beta}, \ldots, p_{i,L+1}(\boldsymbol{\theta})^{\beta})$ and $\mathbf{1}_{L+1}$ is the $(L+1)$-dimensional vector of ones.
		
		Thus,
		\begin{align}
			\boldsymbol{\xi}_{i,\beta}(\boldsymbol{\theta}) \boldsymbol{\xi}_{i,\beta}(\boldsymbol{\theta})^\top 
			&= \mathbf{W}_i(\boldsymbol{\theta})^\top \mathbf{D}_i^{(\beta)}(\boldsymbol{\theta}) \mathbf{1}_{L+1} \mathbf{1}_{L+1}^\top \mathbf{D}_i^{(\beta)}(\boldsymbol{\theta}) \mathbf{W}_i(\boldsymbol{\theta}).
		\end{align}
		
		Therefore,
		\begin{align}
			\mathbf{K}_\beta(\boldsymbol{\theta}) &= \sum_{i=1}^{R} \frac{K_i}{K} \mathbf{W}_i(\boldsymbol{\theta})^\top \mathbf{D}_i^{(2\beta-1)}(\boldsymbol{\theta}) \mathbf{W}_i(\boldsymbol{\theta}) \nonumber\\
			&\quad - \sum_{i=1}^{R} \frac{K_i}{K} \mathbf{W}_i(\boldsymbol{\theta})^\top \mathbf{D}_i^{(\beta)}(\boldsymbol{\theta}) \mathbf{1}_{L+1} \mathbf{1}_{L+1}^\top \mathbf{D}_i^{(\beta)}(\boldsymbol{\theta}) \mathbf{W}_i(\boldsymbol{\theta}) \nonumber\\
			&= \sum_{i=1}^{R} \frac{K_i}{K} \mathbf{W}_i(\boldsymbol{\theta})^\top \Big[ \mathbf{D}_i^{(2\beta-1)}(\boldsymbol{\theta}) - \mathbf{D}_i^{(\beta)}(\boldsymbol{\theta}) \mathbf{1}_{L+1} \mathbf{1}_{L+1}^\top \mathbf{D}_i^{(\beta)}(\boldsymbol{\theta}) \Big] \mathbf{W}_i(\boldsymbol{\theta}),
		\end{align}
		which completes the derivation of the asymptotic covariance matrix in the statement of the theorem.
		
	\end{proof}
	
	
	\noindent Since $\boldsymbol{\theta}_0$ is unknown, the covariance matrix 
	in~equation \eqref{eq:asymp} is estimated by replacing $\boldsymbol{\theta}_0$ with 
	$\widehat{\boldsymbol{\theta}}_\beta$:
	\begin{align}
		\widehat{\mathbf{\Sigma}}_\beta = \frac{1}{K}
		\mathbf{J}_\beta(\widehat{\boldsymbol{\theta}}_\beta)^{-1}\,
		\mathbf{K}_\beta(\widehat{\boldsymbol{\theta}}_\beta)\,
		\mathbf{J}_\beta(\widehat{\boldsymbol{\theta}}_\beta)^{-1},
		\label{eq:Sigma_hat}
	\end{align}
	which is consistent for 
	$\mathbf{J}_\beta(\boldsymbol{\theta}_0)^{-1}\mathbf{K}_\beta(\boldsymbol{\theta}_0)
	\mathbf{J}_\beta(\boldsymbol{\theta}_0)^{-1}$ by the continuous mapping theorem. 
	
	\noindent Denoting the components of $\widehat{\boldsymbol{\theta}}_\beta$ 
	corresponding to $\alpha_0$, $\alpha_1$, and $\sigma$ by 
	$\widehat{\alpha}_{0,\beta}$, $\widehat{\alpha}_{1,\beta}$, and 
	$\widehat{\sigma}_\beta$ respectively, and the diagonal entries of 
	$\widehat{\mathbf{\Sigma}}_\beta$ by 
	$\widehat{\sigma}^2_{\alpha_0}$, $\widehat{\sigma}^2_{\alpha_1}$, 
	and $\widehat{\sigma}^2_{\sigma}$, approximate $100(1-\gamma)\%$ 
	confidence intervals for $\alpha_{0,0}$, $\alpha_{1,0}$, and 
	$\sigma_0$ are
	\begin{align}
		\widehat{\alpha}_{0,\beta} \pm z_{\gamma/2}\,
		\widehat{\sigma}_{\alpha_0},
		\qquad
		\widehat{\alpha}_{1,\beta} \pm z_{\gamma/2}\,
		\widehat{\sigma}_{\alpha_1},
		\qquad
		\widehat{\sigma}_\beta \pm z_{\gamma/2}\,
		\widehat{\sigma}_{\sigma},
		\label{eq:CI}
	\end{align}
	where $z_{\gamma/2}$ is the upper $\gamma/2$ quantile of the 
	standard normal distribution.

    While the confidence intervals in \eqref{eq:CI} establishes inference on the model parameters directly, practitioners are often more interested in lifetime characteristics at the use condition. Some important characteristics of interest are median lifetime, mean time to failure, and reliability at a pre-specified mission time. These are nonlinear functions of $\boldsymbol{\theta}$, and their estimation and inference are developed in the following section.
	
	\section{Lifetime Characteristics at the Use Condition}
    \label{sec:lifetime_char_theory}
	\subsection{Their Definitions and Point Estimates}
	\label{subsubsec:characteristics_pointestimates}
	At the use condition $(s_{0F}, s_{0C})$ with cyclic fraction $\tau$, the location parameter of the lognormal lifetime distribution is
	\begin{equation}
		\mu_0 = \mu_0(\boldsymbol{\theta})
		= -\ln\!\left[
		\tau\,e^{-\alpha_0 - \alpha_1 s_{0C}}
		+ (1-\tau)\,e^{-\alpha_0 - \alpha_1 s_{0F}}
		\right],
		\label{eq:S_mu0}
	\end{equation}
	so that $\log T \sim N(\mu_0, \sigma^2)$ at the use condition. 
	
	\noindent\textbf{Median.}
	The $q$-th quantile of the lifetime distribution at the use condition is 
	\begin{equation}
		t_{q,0} = \exp \bigl(\mu_0 + \sigma\,\Phi^{-1}(q)\bigr),
		\qquad q \in (0,1).
		\label{eq:S_quantile}
	\end{equation}
	Since $\Phi^{-1}(0.5) = 0$, the median simplifies to $t_{0.5,0} = e^{\mu_0}$.
	
	\noindent\textbf{Mean time to failure.}
	For $T \sim \mathrm{Lognormal}(\mu_0, \sigma^2)$,
	\begin{equation}
		\mathrm{MTTF}_0 = E[T] = E[e^X] =
		\exp \left(\mu_0 + \frac{\sigma^2}{2}\right),
		\label{eq:S_mttf}
	\end{equation}
	where $X \sim N(\mu_0, \sigma^2)$. Since $\sigma > 0$, the MTTF always exceeds the median: $\mathrm{MTTF}_0 > t_{0.5,0}$ for all $\sigma > 0$.
	
	\noindent\textbf{Reliability function.}
	The reliability at mission time $t_0 > 0$ is
	\begin{equation}
		R_0(t_0) = P(T > t_0)
		= 1 - \Phi\!\left(\frac{\ln t_0 - \mu_0}{\sigma}\right)
		= 1 - \Phi(a_0),
		\label{eq:S_reliability}
	\end{equation}
	where $a_0 = (\ln t_0 - \mu_0)/\sigma$. Now it is straight forward to obtain the estimates of the above characteristics by plugging-in the WMDPDE s,   $\widehat{\alpha}_{0,\beta}$, $\widehat{\alpha}_{1,\beta}$ and $\widehat{\sigma}_{\beta}$ in the equations \eqref{eq:S_quantile}, \eqref{eq:S_mttf} and \eqref{eq:S_reliability}
	
	\subsection{Confidence Intervals}
	\label{subsec:CIs} 
    
	\subsubsection{Direct and Transformed Asymptotic Confidence Intervals}
	
	Note that all the above lifetime characteristics are a smooth scalar function of the parameter $\boldsymbol{\theta}$. We first obtain the asymptotic results on the confidence intervals for an arbitrary smooth function $\psi(\boldsymbol{\theta})$, then it may be implemented analogously on all the lifetime characteristics.  By Theorem \ref{thm:asymp}, $\sqrt{K}\,(\widehat{\boldsymbol{\theta}}_\beta - \boldsymbol{\theta}_0)$ converges in distribution to  $N(\boldsymbol{0},\,\boldsymbol{\Sigma}_\beta(\boldsymbol{\theta}_0))$, where $\boldsymbol{\Sigma}_\beta (\boldsymbol{\theta})= \boldsymbol{J}_\beta^{-1} (\boldsymbol{\theta}) \boldsymbol{K}_\beta (\boldsymbol{\theta}) \boldsymbol{J}_\beta^{-1}(\boldsymbol{\theta})$. Consequently, by applying the Delta method, we have
	\begin{equation}
		\sqrt{K}\,\bigl(\psi(\widehat{\boldsymbol{\theta}}_\beta)
		- \psi(\boldsymbol{\theta}_0)\bigr)
		\xrightarrow[K\to\infty]{\mathcal{L}}
		N\!\left(0,\;
		\boldsymbol{d}_\psi^\top\,
		\boldsymbol{\Sigma}_\beta\,
		\boldsymbol{d}_\psi\right),
		\label{eq:S_asym_psi}
	\end{equation}
	where $\boldsymbol{d}_\psi =
	\partial\psi/\partial\boldsymbol{\theta}|_{\boldsymbol{\theta}_0}$ is the vector of partial derivatives of $\psi$ with respect to $(\alpha_0, \alpha_1, \sigma)$. The asymptotic standard error of $\psi(\widehat{\boldsymbol{\theta}}_\beta)$ is therefore estimated as
	\begin{equation}
		\widehat{\mathrm{SE}}(\psi) =
		\sqrt{\,\boldsymbol{d}_\psi^\top\,
			\widehat{\boldsymbol{\Sigma}}_\beta\,
			\boldsymbol{d}_\psi\,},
		\label{eq:S_se_psi}
	\end{equation}
	with $\boldsymbol{d}_\psi$ evaluated at $\widehat{\boldsymbol{\theta}}_\beta$ and $\widehat{\boldsymbol{\Sigma}}_\beta$ from equation \eqref{eq:Sigma_hat}. Thus, the $100(1-\gamma)\%$ asymptotically approximate confidence interval for $\psi(\boldsymbol{\theta})$ is then given by 
	\[ \text{IC}_\gamma(\psi(\boldsymbol{\theta})) = \left[ \widehat{\psi} \pm z_{\gamma/2}, \widehat{\mathrm{SE}}(\psi) \right].
	\] 
	Defining
	\begin{equation*}
		\bar{s}_0 =
		\frac{\tau\,s_{0C}\,e^{-\alpha_0-\alpha_1 s_{0C}}
			+ (1-\tau)\,s_{0F}\,e^{-\alpha_0-\alpha_1 s_{0F}}}
		{\tau\,e^{-\alpha_0-\alpha_1 s_{0C}}
			+ (1-\tau)\,e^{-\alpha_0-\alpha_1 s_{0F}}},
	\end{equation*}
	and by straightforward differentiation, the vectors of the partial derivatives of $\psi$ are as follows
	\begin{equation}
		\boldsymbol{d}_{t_{q,0}}
		= t_{q,0}
		\begin{pmatrix} 1 \\ \bar{s}_0 \\ \Phi^{-1}(q)
		\end{pmatrix}, \qquad
		\boldsymbol{d}_{\mathrm{MTTF}_0}
		= \mathrm{MTTF}_0
		\begin{pmatrix} 1 \\ \bar{s}_0 \\ \sigma
		\end{pmatrix}, \qquad
		\boldsymbol{d}_{R_0}
		= \frac{\phi(a_0)}{\sigma}
		\begin{pmatrix} 1 \\ \bar{s}_0 \\ a_0
		\end{pmatrix}.
		\label{eq:S_dvecs}
	\end{equation}
	Thus, the $100(1-\gamma)\%$ asymptotically approximate confidence interval for the lifetime characteristics are as follows
	\begin{align}
		\text{IC}_\gamma(t_{0.5,0}(\boldsymbol{\theta})) & = \left[ \widehat{t}_{0.5,0} \pm z_{\gamma/2}, \widehat{\mathrm{SE}}(t_{0.5,0}) \right],\\ \text{IC}_\gamma(\mathrm{MTTF}_0(\boldsymbol{\theta})) & = \left[ \widehat{\mathrm{MTTF}_0} \pm z_{\gamma/2}, \widehat{\mathrm{SE}}(\mathrm{MTTF}_0) \right], \\  
		\text{IC}_\gamma(R_0(\boldsymbol{\theta}))& = \left[ \widehat{R_0} \pm z_{\gamma/2}, \widehat{\mathrm{SE}}(R_0) \right],
		\label{eq:S_CIs}
	\end{align}
	where the point estimates $\widehat{t}_{0.5,0}$, $\widehat{\mathrm{MTTF}_0}$ and $\widehat{R_0}$ are obtained in \ref{subsubsec:characteristics_pointestimates} and the standard errors $\widehat{\mathrm{SE}}(t_{0.5,0})$, $\widehat{\mathrm{SE}}(\mathrm{MTTF}_0)$, and $\widehat{\mathrm{SE}}(R_0)$ can be obtained from the equations \eqref{eq:S_se_psi} and \eqref{eq:S_dvecs}. These direct CIs are asymptotically valid but do not respect natural parameter constraints, as the lower bound may be negative for lifetimes or outside $(0,1)$ for reliability at moderate sample sizes. Applying the same argument to a monotone transformation $\psi$ before constructing the confidence interval, then inverting back, resolves this without altering the asymptotic validity. Thus, we now present the transformed confidence intervals for these characteristics.
	
	\noindent\textbf{Log-transformed CI for median and MTTF:}
	
	\noindent For $\psi > 0$, applying equation \eqref{eq:S_asym_psi} to $\ln\psi$ gives $\mathrm{SE}(\ln\widehat{\psi}) \approx \widehat{\mathrm{SE}}(\psi)/\widehat{\psi}$.
	Constructing a confidence interval on the log scale and exponentiating, the $100(1-\gamma)\%$ log-transformed confidence intervals for the median and MTTF are
	\begin{align}
		\text{IC}_\gamma(t_{0.5,0}) &=
		\left[
		\widehat{t}_{0.5,0}\cdot\exp\!\left(
		-z_{\gamma/2}\,\frac{\widehat{\mathrm{SE}}(t_{0.5,0})}
		{\widehat{t}_{0.5,0}}\right),\;
		\widehat{t}_{0.5,0}\cdot\exp\!\left(
		+z_{\gamma/2}\,\frac{\widehat{\mathrm{SE}}(t_{0.5,0})}
		{\widehat{t}_{0.5,0}}\right)
		\right],
		\label{eq:S_ci_log_med} \\[6pt]
		\text{IC}_\gamma(\mathrm{MTTF}_0) &=
		\left[
		\widehat{\mathrm{MTTF}}_0\cdot\exp\!\left(
		-z_{\gamma/2}\,\frac{\widehat{\mathrm{SE}}(\mathrm{MTTF}_0)}
		{\widehat{\mathrm{MTTF}}_0}\right),\;
		\widehat{\mathrm{MTTF}}_0\cdot\exp\!\left(
		+z_{\gamma/2}\,\frac{\widehat{\mathrm{SE}}(\mathrm{MTTF}_0)}
		{\widehat{\mathrm{MTTF}}_0}\right)
		\right],
		\label{eq:S_ci_log_mttf}
	\end{align}
	both strictly positive and asymmetric around the point estimate, better reflecting the skewed sampling distribution of lognormal lifetime estimators.

	\noindent\textbf{Logit-transformed CI for reliability:}
	
	\noindent For $R_0(t_0)\in(0,1)$, applying equation \eqref{eq:S_asym_psi} to $\mathrm{logit}(\psi) = \ln(\psi/(1-\psi))$ gives $\mathrm{SE}(\mathrm{logit}\,\hat{\psi}) \approx
	\widehat{\mathrm{SE}}(\psi)/(\hat{\psi}(1-\hat{\psi}))$.
	Setting
	\begin{equation*}
		S = \exp\!\left(\frac{z_{\gamma/2}\,
			\widehat{\mathrm{SE}}(R_0)}
		{\widehat{R}_0\,(1-\widehat{R}_0)}\right)
	\end{equation*}
	and inverting back using
	$\mathrm{logit}^{-1}(y) = e^y/(1+e^y)$, the $100(1-\gamma)\%$ logit-transformed confidence interval for the reliability is
	\begin{equation}
		\text{IC}_\gamma(R_0) =
		\left(
		\frac{\widehat{R}_0}{\widehat{R}_0 + (1-\widehat{R}_0)\,S},\;
		\frac{\widehat{R}_0}{\widehat{R}_0 + (1-\widehat{R}_0)/S}
		\right) \subset (0,1),
		\label{eq:S_ci_logit}
	\end{equation}
	with both bounds guaranteed to lie in $(0,1)$.
	
	\subsubsection{BCa bootstrap confidence interval}
	
	As an alternative to nonparametric and asymptotic CIs and suitable to small-sample cyclic-stress ALT experiments, we report the parametric bias-corrected and accelerated (BCa) percentile bootstrap confidence interval of \cite{efron1994introduction}. For a given $\beta$, suppose we have $B$ parametric bootstrap samples generated from the WMDPDE ,  $\widehat{\boldsymbol{\theta}}_\beta$ which is obtained using the observed data. let $\widehat{\psi}^{(1)}, \ldots, \widehat{\psi}^{(B)}$ be the corresponding bootstrap estimates of $\psi$ and let $\widehat{\psi}_{(1)} \leq \cdots \leq \widehat{\psi}_{(B)}$ denote their order statistics. The two-sided $100(1-\gamma)\%$ BCa confidence interval is
	\begin{equation}
		\text{IC}_{1-\gamma}(\psi) =
		\Bigl[
		\widehat{\psi}_{([q_1 B])},\;
		\widehat{\psi}_{([q_2 B])}
		\Bigr],
		\label{eq:S_ci_bca}
	\end{equation}
	where $[\,\cdot\,]$ denotes the integer part and
	\begin{equation}
		q_1 = \Phi\!\left(\widehat{z}_0 +
		\frac{\widehat{z}_0 - z_{\gamma/2}}
		{1 - \widehat{a}\,(\widehat{z}_0 - z_{\gamma/2})}\right),
		\qquad
		q_2 = \Phi\!\left(\widehat{z}_0 +
		\frac{\widehat{z}_0 + z_{\gamma/2}}
		{1 - \widehat{a}\,(\widehat{z}_0 + z_{\gamma/2})}\right).
		\label{eq:S_bca_q}
	\end{equation}
	The bias-correction constant $\widehat{z}_0$ and acceleration constant $\widehat{a}$ are estimated as follows. The bias-correction is estimated by
	\begin{equation}
		\widehat{z}_0 = \Phi^{-1}\!\left(
		\frac{\#\bigl\{\widehat{\psi}^{(b)} \leq
			\widehat{\psi}\bigr\}}{B}
		\right),
		\label{eq:S_z0}
	\end{equation}
	where $\widehat{\psi}$ is the original estimate based on the observed data. Following \cite{efron1994introduction}, a suggested estimate of the acceleration is given by
	\begin{equation}
		\widehat{a} = \frac{1}{6}
		\left[\,\sum_{l=1}^{n_{RL}}
		\Bigl(\widehat{\psi}^{[-l]} -
		\widehat{\psi}^{(\cdot)}\Bigr)^2
		\right]^{-3/2}
		\left[\,\sum_{l=1}^{n_{RL}}
		\Bigl(\widehat{\psi}^{[-l]} -
		\widehat{\psi}^{(\cdot)}\Bigr)^3
		\right],
		\label{eq:S_accel}
	\end{equation}
	where $\widehat{\psi}^{[-l]}$ is the WMDPDE of $\psi$ computed
	from the observed data with the $l$-th failure removed
	(the jackknife estimate), $n_{RL} = \sum_{i=1}^{R}
	\sum_{j=1}^{L} n_{ij}$ is the total number of observed
	failures across all stress groups and inspection intervals,
	and
	\begin{equation*}
		\widehat{\psi}^{(\cdot)} =
		\frac{1}{n_{RL}}\sum_{l=1}^{n_{RL}} \widehat{\psi}^{[-l]}.
	\end{equation*}

    Having established estimation and inference procedures for the model parameters and lifetime characteristics, we now move to demonstrate the theoretical robustness properties of the WMDPDE  through influence function analysis, which characterizes the effect of data contamination on the estimator.
    
	\section{Influence function for the WMDPDE }
	\label{sec:influence_func}
    
	Let us denote $F_{\boldsymbol{\theta }}^{l}$ for the assumed distribution of 
	$l$-th stress condition with a mass function $\mathbf{p}_{l}\left( 
	\boldsymbol{\theta }\right) =( p_{l1}\left( \boldsymbol{\theta }\right)
	, $ $\dots, p_{l,L+1}\left( \boldsymbol{\theta }\right) ) ^{T},$ $l=1,\dots,R,$
	under the cyclic-stress ALT and interval monitoring with log-normal
	lifetimes, and we also denote $\boldsymbol{F}_{\boldsymbol{\theta }}=\left(
	F_{\boldsymbol{\theta }}^{1},\dots,F_{\boldsymbol{\theta }}^{R}\right) .$ Let $%
	G^{l}$ denote the true distribution underlying the data with mass function $%
	\boldsymbol{g}^{l}=\left( g_{l1},\dots,g_{l,L+1}\right)^{T}, l=1,\dots,R$ and we
	write, 
	\begin{equation*}
		\boldsymbol{G=}\left( G^{1},\dots,G^{R}\right) \text{ and }\boldsymbol{g}%
		=\left( \boldsymbol{g}^{1},\dots,\boldsymbol{g}^{R}\right) .
	\end{equation*}%
	The minimum density power divergence statistical functional,$\boldsymbol{\ T}%
	_{\beta }\left( G^{1},\dots,G^{R}\right) ,$ is defined as the minimizer of the
	WDPD for the vectors 
	\begin{equation*}
		\mathbf{p}_{l}\left( \boldsymbol{\theta }\right) =\left( p_{l1}\left( 
		\boldsymbol{\theta }\right) ,\dots,p_{l,L+1}\left( \boldsymbol{\theta }\right)
		\right) ^{T}\text{ and }\boldsymbol{g}^{l}=\left( g_{l1},\dots,g_{l,L+1}\right)
		^{T},
	\end{equation*}%
	$l=1,\dots,R,$ i.e.$,$%
	\begin{equation*}
		\tsum\limits_{l=1}^{R}\frac{K_{l}}{K}d_{\beta }\left( \boldsymbol{g}^{l},%
		\boldsymbol{T}_{\beta }\left( G^{1},\dots,G^{R}\right) \right) =\min_{%
			\boldsymbol{\theta }\text{ }\in \text{ }\Theta }\tsum\limits_{l=1}^{R}\frac{%
			K_{l}}{K}d_{\beta }\left( \boldsymbol{g}^{l},\mathbf{p}_{l}\left( 
		\boldsymbol{\theta }\right) \right) ,
	\end{equation*}%
	Let $\boldsymbol{\theta }^{T}\boldsymbol{=T}_{\beta }\left(
	G^{1},\dots,G^{R}\right) $ be the minimum weighted density power divergence
	functional with contamination  in all $R$  stress. Consider a contaminate
	version of \ the true lifetime distribution $G^{l}$ by 
	\begin{equation*}
		G_{\varepsilon }^{l}=(1-\varepsilon )G^{l}+\varepsilon \Delta _{t_{0}^{l}}
	\end{equation*}%
	with $\varepsilon $ the contamination proportion and $\Delta _{t_{0}^{l}}$ being the degenerate distribution at the contamination point $t_{0}^{l}$. In the discretized model under consideration, contamination is represented as a cell contamination in the $l$-th stress condition, and so the contamination for $t_{0}^{l}$ should have
	all elements equal to zero except for only one component.
	
	We represent by 
	\begin{equation*}
		\boldsymbol{g}_{\varepsilon }^{l}=(1-\varepsilon )\boldsymbol{g}%
		^{l}+\varepsilon \boldsymbol{e}_{t_{0}^{l}}
	\end{equation*}%
	the resulting multinomial vector associated to $G^{l}$ In which $\boldsymbol{%
		e}_{t_{0}^{l}}$ has all elements zero except for the interval containing the
	point perturbation $t_{0}^{l}.$
	
	Let  $\boldsymbol{\theta }_{l,\varepsilon }^{T}=\boldsymbol{T}_{\beta
	}\left( G^{1},\dots,G^{l-1},G_{\varepsilon }^{l},G^{l+1},\dots,G^{R}\right) $ be the
	minimum weighted density power divergence functional with contamination only
	in the $l$-th stress condition, the influence function (IF) is defined by%
	\begin{eqnarray*}
		\text{IF}(t_{0}^{l},\boldsymbol{T}_{\beta },G^{1},\dots,G^{R})) &=&\lim_{\varepsilon
			\rightarrow 0}\frac{\boldsymbol{T}_{\beta }\left(
			G^{1},\dots,G^{l-1},G_{\varepsilon }^{l},G^{l+1},\dots,G^{R}\right) -\boldsymbol{T}%
			_{\beta }(G^{1},\dots,G^{R})}{\varepsilon } \\
		&=&\left( \frac{\partial \boldsymbol{T}_{\beta }\left(
			G^{1},\dots,G^{l-1},G_{\varepsilon }^{l},G^{l+1},\dots,G^{R}\right) }{\partial
			\varepsilon }\right) _{\varepsilon =0} \\
		&=&\left( \frac{\partial \boldsymbol{\theta }_{l,\varepsilon }^{T}}{\partial
			\varepsilon }\right) _{\varepsilon =0}
	\end{eqnarray*}%
	and the IF in all the $R$ stress conditions is given by%
	\begin{eqnarray*}
		\text{IF}(t_{0}^{1},\dots,t_{0}^{R},\boldsymbol{T}_{\beta },G^{1},\dots,G^{R}))
		&=&\lim_{\varepsilon \rightarrow 0}\frac{\boldsymbol{T}_{\beta }\left(
			G_{\varepsilon }^{1},\dots,G_{\varepsilon }^{l-1},G_{\varepsilon
			}^{l},G_{\varepsilon }^{l+1},\dots,G_{\varepsilon }^{R}\right) -\boldsymbol{T}%
			_{\beta }\left( G^{1},\dots,G^{R}\right) }{\varepsilon } \\
		&=&\left( \frac{\partial \boldsymbol{T}_{\beta }\left( G_{\varepsilon
			}^{1},\dots,G_{\varepsilon }^{l-1},G_{\varepsilon }^{l},G_{\varepsilon
			}^{l+1},\dots,G_{\varepsilon }^{R}\right) }{\partial \varepsilon }\right)
		_{\varepsilon =0} \\
		&=&\left( \frac{\partial \boldsymbol{\theta }_{\varepsilon }^{T}}{\partial
			\varepsilon }\right) _{\varepsilon =0}.
	\end{eqnarray*}%
	In the next Theorem we shall get the IF in two previous situations.
	
	\begin{theorem}
		The IF of the WMDPDE of the cyclic-stress ALT and interval monitoring with
		log-normal lifetime at $l$-th stress condition at a point contamination $%
		t_{0}^{l}$ and the assumed distribution $\boldsymbol{F}_{\boldsymbol{\theta }%
			_{0}}=\left( F_{\boldsymbol{\theta }_{0}}^{1},\dots,F_{\boldsymbol{\theta }%
			_{0}}^{R}\right) $ is given by%
		\begin{equation*}
			\text{IF} \left( t_{0}^{l},\boldsymbol{T}_{\beta },\boldsymbol{F}_{\boldsymbol{%
					\theta }_{0}}\right) =\boldsymbol{J}_{\beta }\left( \boldsymbol{\theta }%
			_{0}\right) ^{-1}\frac{K_{l}}{K}\tsum \limits_{j=1}^{L+1}p_{lj}\left( 
			\boldsymbol{\theta }_{0}\right) ^{\beta -1}\left( \frac{\partial
				p_{lj}\left( \boldsymbol{\theta }\right) }{\partial \boldsymbol{\theta }}%
			\right) _{\boldsymbol{\theta }=\boldsymbol{\theta }_{0}}\left( -p_{lj}\left( 
			\boldsymbol{\theta }_{0}\right) +\boldsymbol{e}_{t_{0}^{l}}\right)
		\end{equation*}%
		and in the case of all $R$ stress conditions at a point contamination $%
		\left( t_{0}^{1},\dots,t_{0}^{R}\right) $ by%
		\begin{equation*}
			\text{IF}\left( t_{0}^{1},\dots,t_{0}^{R},\boldsymbol{T}_{\beta },\boldsymbol{F}_{%
				\boldsymbol{\theta }_{0}}\right) =\boldsymbol{J}_{\beta }\left( \boldsymbol{%
				\theta }_{0}\right) ^{-1}\tsum \limits_{l=1}^{R}\frac{K_{l}}{K}\tsum
			\limits_{j=1}^{L+1}p_{lj}\left( \boldsymbol{\theta }_{0}\right) ^{\beta
				-1}\left( \frac{\partial p_{lj}\left( \boldsymbol{\theta }\right) }{\partial 
				\boldsymbol{\theta }}\right) _{\boldsymbol{\theta }=\boldsymbol{\theta }%
				_{0}}\left( -p_{lj}\left( \boldsymbol{\theta }_{0}\right) +\boldsymbol{e}%
			_{t_{0}^{l}}\right) .
		\end{equation*}
	\end{theorem}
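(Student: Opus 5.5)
The plan is the standard implicit-differentiation argument for M-functionals, applied to the estimating equations \eqref{eq:estimeq} with the empirical vectors $\widehat{\mathbf{p}}_i$ replaced by the (possibly contaminated) true mass vectors $\boldsymbol{g}^i$. The functional $\boldsymbol{\theta}^T=\boldsymbol{T}_\beta(G^1,\dots,G^R)$ minimizes $\sum_i \frac{K_i}{K}d_\beta(\boldsymbol{g}^i,\mathbf{p}_i(\boldsymbol{\theta}))$. Since the term $\frac1\beta\sum_j g_{ij}^{1+\beta}$ does not involve $\boldsymbol{\theta}$, the first-order condition reads, exactly as in \eqref{eq:score},
\begin{equation*}
\sum_{i=1}^{R}\frac{K_i}{K}\sum_{j=1}^{L+1} p_{ij}(\boldsymbol{\theta}^T)^{\beta-1}\big[p_{ij}(\boldsymbol{\theta}^T)-g_{ij}\big]\frac{\partial p_{ij}(\boldsymbol{\theta}^T)}{\partial\boldsymbol{\theta}}=\mathbf{0}_3 .
\end{equation*}
For the single-stress case, I would substitute $\boldsymbol{g}^l_\varepsilon=(1-\varepsilon)\boldsymbol{g}^l+\varepsilon\boldsymbol{e}_{t_0^l}$ in group $l$, leave the other groups unchanged, and set $\boldsymbol{\theta}=\boldsymbol{\theta}^T_{l,\varepsilon}$. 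This gives an identity in $\varepsilon$, which I then differentiate at $\varepsilon=0$.

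\textbf{Derivative of the identity.} Differentiating produces two kinds of terms.

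First, the chain rule through $\boldsymbol{\theta}^T_{l,\varepsilon}$ gives the matrix
\begin{equation*}
\sum_i\frac{K_i}{K}\sum_j\Big\{\big[p_{ij}^{\beta-1}(p_{ij}-g_{ij})\big]\frac{\partial^2 p_{ij}}{\partial\boldsymbol{\theta}\partial\boldsymbol{\theta}^T}+\big(\beta p_{ij}^{\beta}-(\beta-1)p_{ij}^{\beta-2}g_{ij}\big)\frac{\partial p_{ij}}{\partial\boldsymbol{\theta}}\frac{\partial p_{ij}}{\partial\boldsymbol{\theta}^T}\Big\},
\end{equation*}
multiplied by $\partial\boldsymbol{\theta}^T_{l,\varepsilon}/\partial\varepsilon$.

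Second, the explicit $\varepsilon$-dependence appears only in group $l$. It contributes $\frac{K_l}{K}\sum_j p_{lj}^{\beta-1}\frac{\partial p_{lj}}{\partial\boldsymbol{\theta}}\big(g_{lj}-e_{t_0^l,j}\big)$.

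\textbf{Evaluation at the model.} I evaluate at $\boldsymbol{G}=\boldsymbol{F}_{\boldsymbol{\theta}_0}$, so that $g_{ij}=p_{ij}(\boldsymbol{\theta}_0)$ and $\boldsymbol{\theta}^T_{l,0}=\boldsymbol{\theta}_0$ by Fisher consistency of the DPD. The second-derivative term then vanishes, and the bracketed coefficient collapses to $p_{ij}^{\beta-1}$. The Hessian therefore reduces to $\sum_i\frac{K_i}{K}\mathbf{W}_i^\top\mathbf{D}_i^{(\beta-1)}\mathbf{W}_i=\mathbf{J}_\beta(\boldsymbol{\theta}_0)$ from \eqref{eq:Jbeta}. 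I then solve for the derivative using the non-singularity of $\mathbf{J}_\beta(\boldsymbol{\theta}_0)$ assumed in Theorem~\ref{thm:asymp}, which yields the first formula, with the sign $(-p_{lj}+e_{t_0^l,j})$ coming from moving the explicit term across.

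\textbf{All stress conditions.} For the second formula I contaminate all groups simultaneously with the same $\varepsilon$. The Hessian term is identical, and the explicit derivative now contains the sum over $l$ of the group contributions. Linearity gives the stated expression, which is the sum of the single-group IFs.

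\textbf{Main obstacle.} The delicate step is rigor in the implicit differentiation. It requires that $\boldsymbol{\theta}^T_{l,\varepsilon}$ exists, is unique near $\boldsymbol{\theta}_0$, and is differentiable in $\varepsilon$. I would obtain this from the implicit function theorem, using the twice continuous differentiability of $p_{ij}$ and the invertibility of $\mathbf{J}_\beta(\boldsymbol{\theta}_0)$. Positivity of every $p_{lj}(\boldsymbol{\theta}_0)$ keeps the factors $p^{\beta-1}$ and $p^{\beta-2}$ finite, so the map is $C^1$ in a neighbourhood.
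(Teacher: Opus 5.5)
Your proposal follows the paper's proof essentially step by step. You implicitly differentiate the weighted DPD estimating equations with only group $l$ contaminated, evaluate at $\varepsilon=0$ where every term carrying $p_{ij}(\boldsymbol{\theta}_0)-g_{ij}$ vanishes so the Jacobian reduces to $\mathbf{J}_\beta(\boldsymbol{\theta}_0)$, and repeat the computation with all groups contaminated for the second formula, which the paper only asserts ``in a similar way''; your implicit-function-theorem remark adds rigor the paper omits.

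One small slip: the derivative of $p^{\beta-1}(p-g)$ with respect to $p$ is $\beta p^{\beta-1}-(\beta-1)p^{\beta-2}g$, not $\beta p^{\beta}-(\beta-1)p^{\beta-2}g$. With that correction the coefficient does collapse to $p_{ij}^{\beta-1}$ at the model, as you claim.
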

	
	\begin{proof}
		First we are going to get the IF\ when we only have contamination in the $l$-th stress condition. We shall denote 
		\begin{equation*}
			F_{\boldsymbol{\theta }_{l},\varepsilon }^{l}=\left( 1-\varepsilon \right)
			F_{\boldsymbol{\theta }_{0}}^{l}+\varepsilon \boldsymbol{e}_{t_{0}^{l}}
		\end{equation*}%
		and by $\mathbf{p}_{l,\varepsilon }\left( \boldsymbol{\theta }%
		_{l,\varepsilon }\right) =\left( 1-\varepsilon \right) \mathbf{p}_{l}\left( 
		\boldsymbol{\theta }_{0}\right) +\varepsilon \Delta _{t_{0}^{l}}$ the
		corresponding mass function and\\ $\boldsymbol{\theta }_{l,\varepsilon }^{T}=%
		\boldsymbol{T}_{\beta }\left( F_{\boldsymbol{\theta }_{0}}^{1},\dots,F_{%
			\boldsymbol{\theta }_{0}}^{l-1},F_{\boldsymbol{\theta }_{l},\varepsilon
		}^{l},F_{\boldsymbol{\theta }_{0}}^{l+1},\dots,F_{\boldsymbol{\theta }%
			_{0}}^{R}\right) .$ In this situation 
		\begin{equation*}
			G_{\varepsilon }^{l}=(1-\varepsilon )F_{\boldsymbol{\theta }%
				_{0}}^{l}+\varepsilon \Delta _{t_{0}^{l}}.
		\end{equation*}%
		The minimum weighted DPD functional, $\boldsymbol{\theta }_{l,\varepsilon
		}^{T},$ must satisfy the estimating equations, 
		\begin{eqnarray*}
			&&\tsum\limits_{\substack{ i=1 \\ i\neq l}}^{R}\frac{K_{i}}{K}%
			\tsum\limits_{j=1}^{L+1}p_{ij}\left( \boldsymbol{\theta }_{l,\varepsilon
			}\right) ^{\beta -1}\left( p_{ij}\left( \boldsymbol{\theta }_{l,\varepsilon
			}\right) -\boldsymbol{g}_{j}^{i}\right) \left( \frac{\partial p_{ij}\left( 
				\boldsymbol{\theta }\right) }{\partial \boldsymbol{\theta }}\right) _{%
				\boldsymbol{\theta }=\boldsymbol{\theta }_{l,\varepsilon }} \\
			&&+\frac{K_{l}}{K}\tsum\limits_{j=1}^{L+1}p_{lj}\left( \boldsymbol{\theta }%
			_{l,\varepsilon }\right) ^{\beta -1}\left( p_{lj}\left( \boldsymbol{\theta }%
			_{l,\varepsilon }\right) -\boldsymbol{g}_{\varepsilon ,j}^{l}\right) \left( 
			\frac{\partial p_{lj}\left( \boldsymbol{\theta }\right) }{\partial 
				\boldsymbol{\theta }}\right) _{\boldsymbol{\theta }=\boldsymbol{\theta }%
				_{l,\varepsilon }} \\
			&=&\boldsymbol{0}_{3}.
		\end{eqnarray*}%
		Impliciting differentiating, with respect to $\varepsilon ,$ in the previous
		estimating equations, we get
	
	\begin{eqnarray*}
		&&\tsum \limits_{\substack{ i=1  \\ i\neq l}}^{R}\frac{K_{i}}{K}\tsum
		\limits_{j=1}^{L+1}\left( \beta -1\right) p_{ij}\left( \boldsymbol{\theta }%
		_{l,\varepsilon }\right) ^{\beta -2}\left( \frac{\partial p_{ij}\left( 
			\boldsymbol{\theta }\right) }{\partial \boldsymbol{\theta }^{T}}\right) _{%
			\boldsymbol{\theta =\theta }_{l,\varepsilon }}\frac{\partial \boldsymbol{\theta 
			}_{l,\varepsilon }^{T}}{\partial \varepsilon }\left( p_{ij}\left( \boldsymbol{%
			\theta }_{l,\varepsilon }\right) -\boldsymbol{g}_{j}^{i}\right) \left( \frac{%
			\partial p_{ij}\left( \boldsymbol{\theta }\right) }{\partial \boldsymbol{%
				\theta }}\right) _{\boldsymbol{\theta }=\boldsymbol{\theta }_{l,\varepsilon }}
		\\
		&&+\text{ }\tsum \limits_{\substack{ i=1  \\ i\neq l}}^{R}\frac{K_{i}}{K}%
		\tsum \limits_{j=1}^{L+1}p_{ij}\left( \boldsymbol{\theta }_{l,\varepsilon
		}\right) ^{\beta -1}\left( \left( \frac{\partial p_{ij}\left( \boldsymbol{%
				\theta }\right) }{\partial \boldsymbol{\theta }^{T}}\right) _{\boldsymbol{%
				\theta =\theta }_{l,\varepsilon }}\frac{\partial \boldsymbol{\theta }%
			_{l,\varepsilon }^{T}}{\partial \varepsilon }-\frac{\partial \boldsymbol{g}%
			_{j}^{i}}{\partial \varepsilon }\right) \left( \frac{\partial p_{ij}\left( 
			\boldsymbol{\theta }\right) }{\partial \boldsymbol{\theta }}\right) _{%
			\boldsymbol{\theta }=\boldsymbol{\theta }_{l,\varepsilon }} \\
		&&+\tsum \limits_{\substack{ i=1  \\ i\neq l}}^{R}\frac{K_{i}}{K}\tsum
		\limits_{j=1}^{L+1}p_{ij}\left( \boldsymbol{\theta }_{l,\varepsilon }\right)
		^{\beta -1}\left( p_{ij}\left( \boldsymbol{\theta }_{l,\varepsilon }\right) -%
		\boldsymbol{g}_{\varepsilon }^{i}\right) \left( \frac{\partial
			^{2}p_{ij}\left( \boldsymbol{\theta }\right) }{\partial \boldsymbol{\theta }%
			\text{ }\boldsymbol{\partial \theta }^{T}}\right) _{\boldsymbol{\theta }=%
			\boldsymbol{\theta }_{l,\varepsilon }}\frac{\partial \boldsymbol{\theta }%
			_{l,\varepsilon }^{T}}{\partial \varepsilon } \\
		&&+\frac{K_{l}}{K}\tsum \limits_{j=1}^{L+1}\left( \beta -1\right)
		p_{lj}\left( \boldsymbol{\theta }_{l,\varepsilon }\right) ^{\beta -2}\left( 
		\frac{\partial p_{lj}\left( \boldsymbol{\theta }\right) }{\partial 
			\boldsymbol{\theta }^{T}}\right) _{\boldsymbol{\theta =\theta }_{l,\varepsilon
		}}\frac{\partial \boldsymbol{\theta }_{l,\varepsilon }^{T}}{\partial
			\varepsilon }\left( p_{lj}\left( \boldsymbol{\theta }_{l,\varepsilon }\right) -%
		\boldsymbol{g}_{\varepsilon ,j}^{l}\right) \left( \frac{\partial
			p_{lj}\left( \boldsymbol{\theta }\right) }{\partial \boldsymbol{\theta }}%
		\right) _{\boldsymbol{\theta }=\boldsymbol{\theta }_{l,\varepsilon }} \\
		&&+\frac{K_{l}}{K}\tsum \limits_{j=1}^{L+1}p_{lj}\left( \boldsymbol{\theta }%
		_{l,\varepsilon }\right) ^{\beta -1}\left( \left( \frac{\partial p_{lj}\left( 
			\boldsymbol{\theta }\right) }{\partial \boldsymbol{\theta }^{T}}\right) _{%
			\boldsymbol{\theta =\theta }_{l,\varepsilon }}\frac{\partial \boldsymbol{\theta 
			}_{l,\varepsilon }^{T}}{\partial \varepsilon }-\frac{\partial \boldsymbol{g}%
			_{\varepsilon ,j}^{l}}{\partial \varepsilon }\right) \left( \frac{\partial
			p_{lj}\left( \boldsymbol{\theta }\right) }{\partial \boldsymbol{\theta }}%
		\right) _{\boldsymbol{\theta }=\boldsymbol{\theta }_{l,\varepsilon }} 
    \end{eqnarray*} %
    \begin{eqnarray*}
		&&+\frac{K_{l}}{K}\tsum \limits_{j=1}^{L+1}p_{lj}\left( \boldsymbol{\theta }%
		_{l,\varepsilon }\right) ^{\beta -1}\left( p_{lj}\left( \boldsymbol{\theta }%
		_{l,\varepsilon }\right) -\boldsymbol{g}_{\varepsilon ,j}^{l}\right) \left( 
		\frac{\partial ^{2}p_{lj}\left( \boldsymbol{\theta }\right) }{\partial 
			\boldsymbol{\theta }\text{ }\boldsymbol{\partial \theta }^{T}}\right) _{%
			\boldsymbol{\theta }=\boldsymbol{\theta }_{l,\varepsilon }}\frac{\partial 
			\boldsymbol{\theta }_{l,\varepsilon }^{T}}{\partial \varepsilon } \\
		&=&\boldsymbol{0}_{3}.
	\end{eqnarray*}%
	For $\varepsilon =0$ we have, $p_{lj}\left( \boldsymbol{\theta }_{l,\varepsilon
	}\right) =p_{lj}\left( \boldsymbol{\theta }_{0}\right) $, 
	\begin{eqnarray*}
		\boldsymbol{g}_{\varepsilon ,j}^{l} &=&(1-\varepsilon )\boldsymbol{g}%
		_{j}^{l}+\varepsilon \boldsymbol{e}_{t_{0}^{l}}\underset{\varepsilon =0}{=}%
		(1-\varepsilon )p_{lj}\left( \boldsymbol{\theta }_{0}\right) +\varepsilon 
		\boldsymbol{e}_{t_{0}^{l}} \\
		&=&p_{lj}\left( \boldsymbol{\theta }_{0}\right) .
	\end{eqnarray*}%
	and 
	\begin{equation*}
		\frac{\partial \boldsymbol{g}_{\varepsilon ,j}^{i}}{\partial \varepsilon }%
		=\left \{ 
		\begin{array}{ll}
			0 & i\neq l \\ 
			-p_{lj}\left( \boldsymbol{\theta }_{0}\right) +\boldsymbol{e}_{t_{0}^{l}} & 
			i=l%
		\end{array}%
		\right. .
	\end{equation*}
	
	Therefore for $\varepsilon =0$ the previous equation system is given by. 
	\begin{eqnarray*}
		&&\text{ }\tsum\limits_{\substack{ i=1 \\ i\neq l}}^{R}\frac{K_{i}}{K}%
		\tsum\limits_{j=1}^{L+1}p_{ij}\left( \boldsymbol{\theta }_{0}\right) ^{\beta
			-1}\left( \frac{\partial p_{ij}\left( \boldsymbol{\theta }\right) }{\partial 
			\boldsymbol{\theta }^{T}}\right) _{\boldsymbol{\theta =\theta }_{0}}\left( 
		\frac{\partial p_{ij}\left( \boldsymbol{\theta }\right) }{\partial 
			\boldsymbol{\theta }}\right) _{\boldsymbol{\theta }=\boldsymbol{\theta }%
			_{0}}\text{IF}\left( t_{0}^{l},\boldsymbol{T}_{\beta },\boldsymbol{F}_{\boldsymbol{%
				\theta }_{0}}\right)  \\
		&&+\frac{K_{l}}{K}\tsum\limits_{j=1}^{L+1}p_{lj}\left( \boldsymbol{\theta }%
		_{0}\right) ^{\beta -1}\left( \left( \frac{\partial p_{lj}\left( \boldsymbol{%
				\theta }\right) }{\partial \boldsymbol{\theta }^{T}}\right) _{\boldsymbol{%
				\theta =\theta }_{0}}\text{IF}\left( t_{0}^{l},\boldsymbol{T}_{\beta },\boldsymbol{F%
		}_{\boldsymbol{\theta }_{0}}\right) -\frac{\partial \boldsymbol{g}%
			_{\varepsilon ,j}^{l}}{\partial \varepsilon }\right) \left( \frac{\partial
			p_{lj}\left( \boldsymbol{\theta }\right) }{\partial \boldsymbol{\theta }}%
		\right) _{\boldsymbol{\theta }=\boldsymbol{\theta }_{0}} \\
		&=&\boldsymbol{0}_{3}
	\end{eqnarray*}%
	or equivalently 
	\begin{eqnarray*}
		&&\text{ }\tsum\limits_{\substack{ i=1 \\ i\neq l}}^{R}\frac{K_{i}}{K}%
		\tsum\limits_{j=1}^{L+1}p_{ij}\left( \boldsymbol{\theta }_{0}\right) ^{\beta
			-1}\left( \frac{\partial p_{ij}\left( \boldsymbol{\theta }\right) }{\partial 
			\boldsymbol{\theta }^{T}}\right) _{\boldsymbol{\theta =\theta }_{0}}\left( 
		\frac{\partial p_{ij}\left( \boldsymbol{\theta }\right) }{\partial 
			\boldsymbol{\theta }}\right) _{\boldsymbol{\theta }=\boldsymbol{\theta }%
			_{0}}\text{IF}\left( t_{0}^{l},\boldsymbol{T}_{\beta },\boldsymbol{F}_{\boldsymbol{%
				\theta }_{0}}\right)  \\
		&&+\frac{K_{l}}{K}\tsum\limits_{j=1}^{L+1}p_{lj}\left( \boldsymbol{\theta }%
		_{0}\right) ^{\beta -1}\left( \frac{\partial p_{lj}\left( \boldsymbol{\theta 
			}\right) }{\partial \boldsymbol{\theta }^{T}}\right) _{\boldsymbol{\theta
				=\theta }_{0}}\left( \frac{\partial p_{lj}\left( \boldsymbol{\theta }\right) 
		}{\partial \boldsymbol{\theta }}\right) _{\boldsymbol{\theta }=\boldsymbol{%
				\theta }_{0}}\text{IF}\left( t_{0}^{l},\boldsymbol{T}_{\beta },\boldsymbol{F}_{%
			\boldsymbol{\theta }_{0}}\right)  \\
		&&-\frac{K_{l}}{K}\tsum\limits_{j=1}^{L+1}p_{lj}\left( \boldsymbol{\theta }%
		_{0}\right) ^{\beta -1}\left( -p_{lj}\left( \boldsymbol{\theta }_{0}\right) +%
		\boldsymbol{e}_{t_{0}^{l}}\right) \left( \frac{\partial p_{lj}\left( 
			\boldsymbol{\theta }\right) }{\partial \boldsymbol{\theta }}\right) _{%
			\boldsymbol{\theta }=\boldsymbol{\theta }_{0}} \\
		&=&\boldsymbol{0}_{3}.
	\end{eqnarray*}%
	Therefore we have, 
	\begin{equation*}
		\boldsymbol{J}_{\beta }\left( \boldsymbol{\theta }_{0}\right) \text{IF}\left(
		t_{0}^{l},\boldsymbol{T}_{\beta },\boldsymbol{F}_{\boldsymbol{\theta }%
			_{0}}\right) -\frac{K_{l}}{K}\tsum\limits_{j=1}^{L+1}p_{lj}\left( 
		\boldsymbol{\theta }_{0}\right) ^{\beta -1}\left( -p_{lj}\left( \boldsymbol{%
			\theta }_{0}\right) +\boldsymbol{e}_{t_{0}^{l}}\right) \left( \frac{\partial
			p_{lj}\left( \boldsymbol{\theta }\right) }{\partial \boldsymbol{\theta }}%
		\right) _{\boldsymbol{\theta }=\boldsymbol{\theta }_{0}}=\boldsymbol{0}_{3}
	\end{equation*}%
	and 
	\begin{equation*}
		\text{IF}\left( t_{0}^{l},\boldsymbol{T}_{\beta },\boldsymbol{F}_{\boldsymbol{%
				\theta }_{0}}\right) =\boldsymbol{J}_{\beta }\left( \boldsymbol{\theta }%
		_{0}\right) ^{-1}\frac{K_{l}}{K}\tsum\limits_{j=1}^{L+1}p_{lj}\left( 
		\boldsymbol{\theta }_{0}\right) ^{\beta -1}\left( \frac{\partial
			p_{lj}\left( \boldsymbol{\theta }\right) }{\partial \boldsymbol{\theta }}%
		\right) _{\boldsymbol{\theta }=\boldsymbol{\theta }_{0}}\left( -p_{lj}\left( 
		\boldsymbol{\theta }_{0}\right) +\boldsymbol{e}_{t_{0}^{l}}\right) .
	\end{equation*}
	
	In a similar way can be obtained the IF for all $R$ stress conditions. 
    
    \end{proof}

	\bigskip 
	
	Now we pay special attention to $\text{IF}\left( t_{0}^{l},\boldsymbol{T}_{\beta },%
	\boldsymbol{F}_{\boldsymbol{\theta }_{0}}\right) $. The matrix $\boldsymbol{J%
	}_{\beta }\left( \boldsymbol{\theta }_{0}\right) $ is assumed to be bounded,
	and so the robustness of the estimators depends on the boundedness of the
	second factor of the IF, given by%
	\begin{equation*}
		\frac{K_{l}}{K}\tsum\limits_{j=1}^{L+1}p_{lj}\left( \boldsymbol{\theta }%
		_{0}\right) ^{\beta -1}\left( \frac{\partial p_{lj}\left( \boldsymbol{\theta 
			}\right) }{\partial \boldsymbol{\theta }}\right) _{\boldsymbol{\theta }=%
			\boldsymbol{\theta }_{0}}\left( -p_{lj}\left( \boldsymbol{\theta }%
		_{0}\right) +\boldsymbol{e}_{t_{0}^{l}}\right). 
	\end{equation*}	
	A crucial requirement for robustness of an estimator is that the IF remains bounded. In our formulation, since the data space is
	partitioned into a finite number of inspection intervals $L+1$, partial
	derivatives $\left( \frac{\partial p_{lj}\left( \boldsymbol{\theta }\right) 
	}{\partial \boldsymbol{\theta }}\right) _{\boldsymbol{\theta }=\boldsymbol{%
			\theta }_{0}}$ and the probabilities $p_{lj}\left( \boldsymbol{\theta }%
	_{0}\right) $ are inherently bounded for any valid parameter space.
	Furthermore, the term $p_{lj}\left( \boldsymbol{\theta }_{0}\right) ^{\beta
		-1}$ \ acts as a continuous downweighting mechanism: Although the influence
	function remains unbounded for $0\leq \beta <1$ when some cell probabilities
	become arbitrarily small, the divergence is substantially attenuated for
	every $\beta >0$ compared with the maximum likelihood case ($\beta =0)$.
	Consequently, the influence function exhibits improved robustness properties
	as soon as $\beta $ departs from zero. If $\beta \geq 1$ therefore when cell
	probabilities become arbitrarily small the term $p_{lj}\left( \boldsymbol{%
		\theta }_{0}\right) ^{\beta -1}$ tends to $1$ if $\beta =0$ or $0$ if $\beta
	>1$.

    The influence function analysis confirms that the WMDPDE offers improved robustness over the MLE for every $\beta > 0$, with the degree of robustness increasing with $\beta$. In the following section, we quantify this robustness advantage empirically through a Monte Carlo simulation study.

	\section{Sensitivity Analysis}
	\label{sec:sim_analysis}
	
	We conduct a sensitivity analysis through Monte Carlo simulation study to evaluate the performance of the WMDPDE when we have data contaminated. The study assesses two complementary properties: efficiency under clean data ($\varepsilon = 0$) and robustness relative to the MLE as the contamination proportion $\varepsilon$ increases.
	
	\subsection{Simulation CyALT design}
	\label{subsec:sim_CyALT_design}
	
	The CyALT experiment is designed with two cyclic-stress conditions ($R = 2$). Both conditions share a common floor level $s_F = 0.40$, which exceeds the ceiling level of the normal operating cyclic-stress condition, $s_{0C} = 0.20$, ensuring full acceleration throughout the assumed CyALT design. A total of $K = 200$ units are tested on the elevated ceiling stress levels, $s_{1C} = 0.65$ and $s_{2C} = 1.00$, with cyclic fraction $\tau = 0.40$. These $K = 200$ units are allocated as $K_1 = 120$ and $K_2 = 80$, following proportions
	$\pi_1 = 0.60$ and $\pi_2 = 0.40$. The functional test runs to a censoring time of $t_c = 80$ cycles, with $L = 6$ pre-fixed inspection times at $15, 25, 35, 50, 65$, and $80$ cycles, where the number of failures at these times are recorded. The use condition has an $s_{0F} = 0.00$ and $s_{0C} = 0.20$, and the mission time is $t_0 = 100$ cycles. It is to be noted here that all the stress conditions discussed in our configuration represent standardized stress levels. The configuration under this simulation study is summarized in Figure \ref{fig:sim_design}.
	
	\begin{figure}[htbp]
		\centering
		\resizebox{0.9\textwidth}{!}{%
			\begin{tikzpicture}[>=stealth, scale=1.0]

%
%

\fill[gray!13]  (0,0.0) rectangle (12,1.2);   

\foreach \y in {0.0, 1.2, 2.4, 3.9, 6.0}
  \draw[gray!35, thin, dashed] (0,\y) -- (12,\y);

\draw[gray!55, thick]
  (0,0)--(0,1.2)--(0.8,1.2)--(0.8,0)--(2,0)
  --(2,1.2)--(2.8,1.2)--(2.8,0)--(4,0)
  --(4,1.2)--(4.8,1.2)--(4.8,0)--(6,0)
  --(6,1.2)--(6.8,1.2)--(6.8,0)--(8,0)
  --(8,1.2)--(8.8,1.2)--(8.8,0)--(10,0)
  --(10,1.2)--(10.8,1.2)--(10.8,0)--(12,0);

\draw[blue!65!black, very thick]
  (0,2.4)--(0,3.9)--(0.8,3.9)--(0.8,2.4)--(2,2.4)
  --(2,3.9)--(2.8,3.9)--(2.8,2.4)--(4,2.4)
  --(4,3.9)--(4.8,3.9)--(4.8,2.4)--(6,2.4)
  --(6,3.9)--(6.8,3.9)--(6.8,2.4)--(8,2.4)
  --(8,3.9)--(8.8,3.9)--(8.8,2.4)--(10,2.4)
  --(10,3.9)--(10.8,3.9)--(10.8,2.4)--(12,2.4);

\draw[red!65!black, very thick]
  (0,2.4)--(0,6.0)--(0.8,6.0)--(0.8,2.4)--(2,2.4)
  --(2,6.0)--(2.8,6.0)--(2.8,2.4)--(4,2.4)
  --(4,6.0)--(4.8,6.0)--(4.8,2.4)--(6,2.4)
  --(6,6.0)--(6.8,6.0)--(6.8,2.4)--(8,2.4)
  --(8,6.0)--(8.8,6.0)--(8.8,2.4)--(10,2.4)
  --(10,6.0)--(10.8,6.0)--(10.8,2.4)--(12,2.4);

\node[font=\large, gray!65] at (12.65, 0.6)  {$\cdots$};
\node[font=\large, gray!65] at (12.65, 3.15) {$\cdots$};
\node[font=\large, gray!65] at (12.65, 4.50) {$\cdots$};

\draw[thick, ->] (-0.3,0) -- (13.4,0)
  node[right, font=\small] {Cycle};
\draw[thick, ->] (-0.3,0) -- (-0.3,7.0)
  node[above, font=\small] {Stress $s$};

\foreach \y/\lab in {
  0.0/{$0.00$},
  1.2/{$0.20$},
  2.4/{$0.40$},
  3.9/{$0.65$},
  6.0/{$1.00$}}
{
  \draw (-0.45,\y)--(-0.3,\y);
  \node[left, font=\small] at (-0.48,\y) {\lab};
}

\draw[<->, gray!60, thin] (0,-0.55)--(0.8,-0.55)
  node[midway, below, font=\footnotesize] {$\tau T$};
\draw[<->, gray!60, thin] (0.8,-0.55)--(2,-0.55)
  node[midway, below, font=\footnotesize] {$(1{-}\tau)T$};
\draw[gray!40, thin, dotted] (0.8,-0.80)--(0.8,6.5);  
\draw[decorate,
  decoration={brace, mirror, amplitude=5pt, raise=3pt}]
  (0,-0.85)--(2,-0.85)
  node[midway, below=7pt, font=\footnotesize] {One cycle of $T$ units};

\foreach \xx in {2.25, 3.75, 5.25, 7.50, 9.75, 12.0}
  \draw[gray!45, thin, dashed] (\xx,0)--(\xx,6.5);

\draw[thick, ->] (0,-2.0)--(13.5,-2.0)
  node[right, font=\small] {$t$ (cycles)};
\node[below, font=\footnotesize] at (0,-2.0) {$0$};

\foreach \xx/\lbl/\t in {
  2.25/{$IT_1$}/{15},
  3.75/{$IT_2$}/{25},
  5.25/{$IT_3$}/{35},
  7.50/{$IT_4$}/{50},
  9.75/{$IT_5$}/{65},
  12.0/{$IT_6$}/{80}}
{
  \draw[thick] (\xx,-1.88)--(\xx,-2.12);
  \node[above, font=\footnotesize] at (\xx,-1.88) {\lbl};
  \node[below, font=\footnotesize] at (\xx,-2.12) {\t};
}

\node[right, font=\footnotesize, gray!70]   at (12.0,0.0)  {$s_{0F}$};
\node[right, font=\footnotesize, gray!70]   at (12.0,1.2)  {$s_{0C}$};
\node[right, font=\footnotesize, black!60]  at (12.0,2.4)  {$s_F$};
\node[right, font=\footnotesize, blue!65!black] at (12.0,3.9) {$s_{1C} \rightarrow Cond. 1:\;K_1{=}120$};
\node[right, font=\footnotesize, red!65!black]  at (12.0,6.0) {$s_{2C}\rightarrow Cond. 2:\;K_1{=}80$};


\end{tikzpicture}   
		}
		\caption{Cyclic-stress ALT design for the simulation study. The grey band shows the use-condition stress cycling between $s_{0F} = 0.00$ and $s_{0C} = 0.20$. Vertical dashed lines mark the $L = 6$ inspection times $IT_j \in \{15, 25, 35, 50, 65, 80\}$ cycles, $ j=1,\ldots,6$ with the full test timeline (bottom).}
		\label{fig:sim_design}
	\end{figure}
	
	We assume the true parameter vector to be
	$\boldsymbol{\theta}_0 = (\alpha_0, \alpha_1, \sigma)^\top 	= (5.0,\,-2.0,\,0.5)^\top$, which is equivalent to supposing the lifetime of a unit exposed to $s$ level of stress follows a log-normal distribution with a scale parameter  $\zeta (s) = e^{5-2s}$ and shape parameter $\sigma = 0.5$. Under this model, the true lifetime characteristics at the normal cyclic-stress condition are: median lifetime $t_{0.5,0} = 124.016$ cycles, $\mathrm{MTTF}_0 = 140.528$ cycles, and reliability $R_0(100) = 0.667$. 
	
	The WMDPD estimation method is demonstrated for various values of the DPD tuning parameter $\beta \in \{0,\,0.2,\,0.4,\,0.6,\,0.8,\,1.0\}$, where $\beta = 0$ leads to the case of maximum likelihood estimation. The contamination proportion considered ranges over $11$ equally spaced levels such that $\varepsilon \in \{0,\,0.025,\,0.050,\,\ldots,\,0.250\}$. For each $(\varepsilon, \beta)$ combination, $1000$ independent interval-censored datasets are generated under the CyALT design, each consisting of multinomial failure counts for the $R = 2$ stress groups across the $L = 6$ inspection intervals. The WMDPDE is fitted to each dataset to assess the efficiency and robustness trade-off governed by $\beta$ under increasing data contamination. 
	
	
    
    \FloatBarrier
    \subsection{Contamination scheme}
    \label{subsec:contam_scheme}    
    To assess the robustness of the WMDPDE under realistic data contamination, we introduce a proportion $\varepsilon$ of outlying units in Group~1 exhibiting premature failure concentrated in the first three inspection intervals. The contamination follows the two-sample mixture model of \citet{basu1998robust} applied to multinomial interval-censored data. For Group~1, the observed count vector $\boldsymbol{n}_1 = (n_{11},\ldots,n_{1,L+1})$ is the sum of two independent multinomial draws such that $ \boldsymbol{n}_1 \;=\; \boldsymbol{n}_1^{(0)} + \boldsymbol{n}_1^{(c)}$, where $\boldsymbol{n}_1^{(0)} \sim \mathrm{Multinomial} \!\left(\lfloor(1-\varepsilon)K_1\rfloor,\,
    \mathbf{p}_1(\boldsymbol{\theta}_0)\right)$ is drawn from the true lognormal model and $\boldsymbol{n}_1^{(c)} \sim \mathrm{Multinomial}\!\left(K_1 - \lfloor(1-\varepsilon)K_1\rfloor,\, \boldsymbol{q}\right)$ is drawn from the contamination distribution, 
    \begin{equation}
        q_j = \begin{cases} \tfrac{1}{3}, & j = 1, 2, 3, \\[3pt]
                            0,            & j = 4,\ldots,L+1,
              \end{cases}
        \label{eq:q_cont}
    \end{equation}
    which concentrates mass uniformly across the first three inspection intervals $(0,\,15]$, $(15,\,25]$, and $(25,\,35]$.
    Group~2 is always drawn from the true model without contamination. In expectation, the contaminated cell proportions for Group~1 reduce to
    \begin{equation}
        \tilde{p}_{1j}(\varepsilon) =
        (1-\varepsilon)\,p_{1j}(\boldsymbol{\theta}_0) + \varepsilon\,q_j,
        \qquad j = 1,\ldots,L+1,
        \label{eq:cont_prob}
    \end{equation}
    which constitute a valid probability vector since $\sum_{j=1}^{L+1}\tilde{p}_{1j} = 1$.
    
	\begin{table}[htbp]
    \centering
		\small
		\caption{MSE of WMDPDE for the model parameters under multi-cell contamination based on $1000$ Monte Carlo replications with $\boldsymbol{\theta}_0 = (5.0,\,-2.0,\,0.5)^\top$.}
		\begin{tabularx}{\textwidth}{c *{6}{>{\centering\arraybackslash}X}}
			\toprule
			\multirow{2}{*}{$\varepsilon$} & \multirow{2}{*}{MLE} 
			& \multicolumn{5}{c}{$\beta$} \\
			\cmidrule(lr){3-7}
			& & $0.2$ & $0.4$ & $0.6$ & $0.8$ & $1.0$ \\
			\midrule
			\multicolumn{7}{c}{%
				------- $\alpha_0$ -------} \\[3pt]
			$0.000$ & 0.0308 & 0.0314 & 0.0326 & 0.0342 & 0.0361 & 0.0382 \\
			$0.025$ & 0.0379 & 0.0362 & 0.0360 & 0.0366 & 0.0377 & 0.0393 \\
			$0.050$ & 0.0547 & 0.0496 & 0.0469 & 0.0461 & 0.0463 & 0.0472 \\
			$0.075$ & 0.0799 & 0.0693 & 0.0625 & 0.0587 & 0.0571 & 0.0567 \\
			$0.100$ & 0.1265 & 0.1086 & 0.0956 & 0.0872 & 0.0823 & 0.0799 \\
			$0.125$ & 0.1695 & 0.1441 & 0.1243 & 0.1105 & 0.1018 & 0.0967 \\
			$0.150$ & 0.2456 & 0.2130 & 0.1857 & 0.1649 & 0.1505 & 0.1412 \\
			$0.175$ & 0.3196 & 0.2813 & 0.2478 & 0.2208 & 0.2010 & 0.1877 \\
			$0.200$ & 0.4309 & 0.3849 & 0.3426 & 0.3064 & 0.2782 & 0.2580 \\
			$0.225$ & 0.5365 & 0.4852 & 0.4362 & 0.3920 & 0.3554 & 0.3277 \\
			$0.250$ & 0.6838 & 0.6299 & 0.5768 & 0.5275 & 0.4853 & 0.4524 \\
			\midrule
			\multicolumn{7}{c}{%
				------- $ \alpha_1$ -------} \\[3pt]
			$0.000$ & 0.0707 & 0.0723 & 0.0757 & 0.0803 & 0.0860 & 0.0922 \\
			$0.025$ & 0.0849 & 0.0818 & 0.0819 & 0.0844 & 0.0884 & 0.0935 \\
			$0.050$ & 0.1127 & 0.1045 & 0.1013 & 0.1017 & 0.1044 & 0.1085 \\
			$0.075$ & 0.1543 & 0.1362 & 0.1254 & 0.1204 & 0.1195 & 0.1212 \\
			$0.100$ & 0.2324 & 0.2021 & 0.1811 & 0.1687 & 0.1627 & 0.1612 \\
			$0.125$ & 0.3059 & 0.2628 & 0.2303 & 0.2087 & 0.1961 & 0.1899 \\
			$0.150$ & 0.4333 & 0.3774 & 0.3319 & 0.2984 & 0.2764 & 0.2635 \\
			$0.175$ & 0.5539 & 0.4880 & 0.4317 & 0.3877 & 0.3567 & 0.3370 \\
			$0.200$ & 0.7565 & 0.6755 & 0.6028 & 0.5419 & 0.4956 & 0.4637 \\
			$0.225$ & 0.9447 & 0.8520 & 0.7655 & 0.6890 & 0.6268 & 0.5808 \\
			$0.250$ & 1.2110 & 1.1116 & 1.0161 & 0.9291 & 0.8562 & 0.8007 \\
			\midrule
			\multicolumn{7}{c}{%
				------- $\sigma$ -------} \\[3pt]
			$0.000$ & 0.0009 & 0.0009 & 0.0010 & 0.0011 & 0.0012 & 0.0013 \\
			$0.025$ & 0.0011 & 0.0010 & 0.0010 & 0.0010 & 0.0011 & 0.0012 \\
			$0.050$ & 0.0017 & 0.0014 & 0.0013 & 0.0013 & 0.0013 & 0.0013 \\
			$0.075$ & 0.0027 & 0.0023 & 0.0020 & 0.0018 & 0.0018 & 0.0017 \\
			$0.100$ & 0.0039 & 0.0032 & 0.0027 & 0.0024 & 0.0023 & 0.0022 \\
			$0.125$ & 0.0054 & 0.0045 & 0.0037 & 0.0032 & 0.0029 & 0.0027 \\
			$0.150$ & 0.0072 & 0.0063 & 0.0054 & 0.0047 & 0.0042 & 0.0039 \\
			$0.175$ & 0.0083 & 0.0074 & 0.0065 & 0.0057 & 0.0051 & 0.0047 \\
			$0.200$ & 0.0100 & 0.0090 & 0.0080 & 0.0071 & 0.0064 & 0.0059 \\
			$0.225$ & 0.0120 & 0.0111 & 0.0101 & 0.0091 & 0.0082 & 0.0075 \\
			$0.250$ & 0.0130 & 0.0122 & 0.0113 & 0.0104 & 0.0095 & 0.0089 \\
			\bottomrule
		\end{tabularx}
		\label{tab:mse_par}
	\end{table}
    
    This scheme represents a physically realistic scenario in which a proportion of units in Group~1 originate from a substandard production batch exhibiting premature failure, well before the bulk of the population is expected to fail under the assumed lognormal model. For example, at $\varepsilon = 0.250$, the contaminated draw adds $n_{\mathrm{cont}} = K_1 - \lfloor(1-\varepsilon)K_1\rfloor = 120 - \lfloor 0.75 \times 120 \rfloor = 30$ units uniformly across the three early intervals, $10$ units per interval, against a combined expected count of approximately $24$ units under the true model, representing an overcount of roughly $2.2$ times.    
    
	\subsection{MSEs and CIs of the estimators}
    \label{sec_5.3}

    First, the MSE of the WMDPDE is reported for the model parameters $\alpha_0$, $\alpha_1$, $\sigma$ and for the lifetime characteristics $t_{0.5,0}$, $\mathrm{MTTF}_0$, and $R_0(t_0)$ while varying the contamination proportion $\varepsilon$ and DPD tuning parameter $\beta$, summarizing the point estimation accuracy. Second, three types of $95\%$ confidence intervals from Section \ref{subsec:CIs} are constructed, and their coverage probability (CP) and average width (AW) are evaluated: direct asymptotic CIs, transformed asymptotic CIs (log-transformation for $t_{0.5,0}$ and $\mathrm{MTTF}_0$; logit-transformation for $R_0(t_0)$), and BCa bootstrap CIs. Together, these results provide a comprehensive picture of the efficiency and robustness trade-off governed by tuning parameter $\beta$ under increasing data contamination.
	
	\begin{table}[htbp]
		\centering
		\small
		\caption{RMSE of WMDPDE for lifetime characteristics under multi-cell contamination based on $1000$ Monte Carlo replications with $\boldsymbol{\theta}_0 = (5.0,\,-2.0,\,0.5)^\top$.}
		\begin{tabularx}{\textwidth}{c *{6}{>{\centering\arraybackslash}X}}
			\toprule
			\multirow{2}{*}{$\varepsilon$} & \multirow{2}{*}{MLE} 
			& \multicolumn{5}{c}{$\beta$} \\
			\cmidrule(lr){3-7}
			& & $0.2$ & $0.4$ & $0.6$ & $0.8$ & $1.0$ \\
			\midrule
			\multicolumn{7}{c}{%
				------- Median $ (t_{0.5,0})$ -------} \\[3pt]
			$0.000$ & 347.4381 & 356.6034 & 372.6519 & 393.9554 & 419.7740 & 449.0195 \\
			$0.025$ & 357.0347 & 350.2933 & 355.0651 & 367.2133 & 384.2778 & 404.9652 \\
			$0.050$ & 473.3791 & 437.1889 & 421.4191 & 420.6192 & 429.3586 & 443.7770 \\
			$0.075$ & 671.3932 & 594.2522 & 545.5111 & 521.2210 & 514.2611 & 518.2979 \\
			$0.100$ & 995.5842 & 869.7040 & 776.2879 & 715.9490 & 682.0797 & 666.5478 \\
			$0.125$ & 1295.4714 & 1122.7661 & 981.7642 & 880.3063 & 815.2676 & 777.0813 \\
			$0.150$ & 1768.3120 & 1567.1260 & 1388.4503 & 1245.3243 & 1142.1188 & 1073.1247 \\
			$0.175$ & 2197.9553 & 1982.7810 & 1782.5693 & 1612.7421 & 1483.0609 & 1392.1557 \\
			$0.200$ & 2780.0249 & 2545.6960 & 2316.5644 & 2109.4309 & 1940.2288 & 1813.9268 \\
			$0.225$ & 3267.2996 & 3032.0126 & 2792.8151 & 2564.3858 & 2364.8444 & 2206.3321 \\
			$0.250$ & 3921.1748 & 3699.5514 & 3466.8235 & 3235.6640 & 3025.1280 & 2851.9708 \\
			\midrule
			\multicolumn{7}{c}{%
				------- $\mathrm{MTTF}_0$ -------} \\[3pt]
			$0.000$ & 458.8328 & 470.1590 & 489.6857 & 515.3863 & 546.0828 & 580.3937 \\
			$0.025$ & 459.0539 & 453.3814 & 460.6779 & 476.5136 & 498.0512 & 523.6371 \\
			$0.050$ & 580.2460 & 547.2040 & 535.6243 & 539.5123 & 553.1868 & 572.6097 \\
			$0.075$ & 773.3964 & 698.4717 & 653.7853 & 634.2309 & 632.2501 & 641.4372 \\
			$0.100$ & 1124.7854 & 996.8751 & 904.4849 & 846.5311 & 815.6251 & 803.3544 \\
			$0.125$ & 1430.1391 & 1251.7100 & 1109.4504 & 1009.0429 & 945.9084 & 909.9079 \\
			$0.150$ & 1942.6814 & 1727.8183 & 1540.9518 & 1393.5490 & 1288.2982 & 1218.4631 \\
			$0.175$ & 2439.3180 & 2203.7951 & 1988.9509 & 1809.4981 & 1673.8189 & 1579.2605 \\
			$0.200$ & 3103.6104 & 2840.5563 & 2588.1441 & 2363.2467 & 2181.1127 & 2045.5002 \\
			$0.225$ & 3660.9440 & 3391.0746 & 3121.6033 & 2868.1864 & 2649.2503 & 2476.3103 \\
			$0.250$ & 4455.8094 & 4195.8240 & 3927.2461 & 3664.3877 & 3427.4587 & 3233.3524 \\
			\midrule
			\multicolumn{7}{c}{%
				------- $R_0(t_0)$ -------} \\[3pt]
			$0.000$ & 0.0115 & 0.0117 & 0.0121 & 0.0126 & 0.0132 & 0.0139 \\
			$0.025$ & 0.0154 & 0.0145 & 0.0142 & 0.0143 & 0.0147 & 0.0152 \\
			$0.050$ & 0.0235 & 0.0212 & 0.0198 & 0.0192 & 0.0191 & 0.0193 \\
			$0.075$ & 0.0343 & 0.0299 & 0.0269 & 0.0251 & 0.0242 & 0.0238 \\
			$0.100$ & 0.0528 & 0.0460 & 0.0408 & 0.0372 & 0.0351 & 0.0339 \\
			$0.125$ & 0.0687 & 0.0595 & 0.0520 & 0.0464 & 0.0428 & 0.0405 \\
			$0.150$ & 0.0940 & 0.0835 & 0.0741 & 0.0666 & 0.0611 & 0.0574 \\
			$0.175$ & 0.1173 & 0.1061 & 0.0956 & 0.0866 & 0.0798 & 0.0749 \\
			$0.200$ & 0.1475 & 0.1356 & 0.1238 & 0.1131 & 0.1043 & 0.0977 \\
			$0.225$ & 0.1722 & 0.1605 & 0.1485 & 0.1369 & 0.1268 & 0.1186 \\
			$0.250$ & 0.2063 & 0.1955 & 0.1840 & 0.1724 & 0.1618 & 0.1529 \\
			\bottomrule
		\end{tabularx}
		\label{tab:mse_char}
	\end{table}
    
    Tables \ref{tab:mse_par} and \ref{tab:mse_char} report the MSE/RMSE of the WMDPDE for the model parameters and lifetime characteristics. Under clean data, the MLE achieves the lowest MSE for all quantities, as expected from its asymptotic efficiency. The 
    MSE increases monotonically with $\beta$, reflecting the small efficiency cost of robustification, which is expected and consistent with the general properties of the DPD family.

    As contamination increases, the picture of this ordering changes. The MLE becomes increasingly sensitive to contamination, with its MSE rising sharply, while the WMDPDE with larger $\beta$ remains more stable. This shift occurs even at low contamination levels and holds consistently across both tables and confirms the robustness advantage of the WMDPDE under contaminated data.
    \FloatBarrier
    \begin{figure}[ht]
		\centering
		{\small Model Parameters}\\[0.75em]
		\includegraphics[width=1\linewidth]{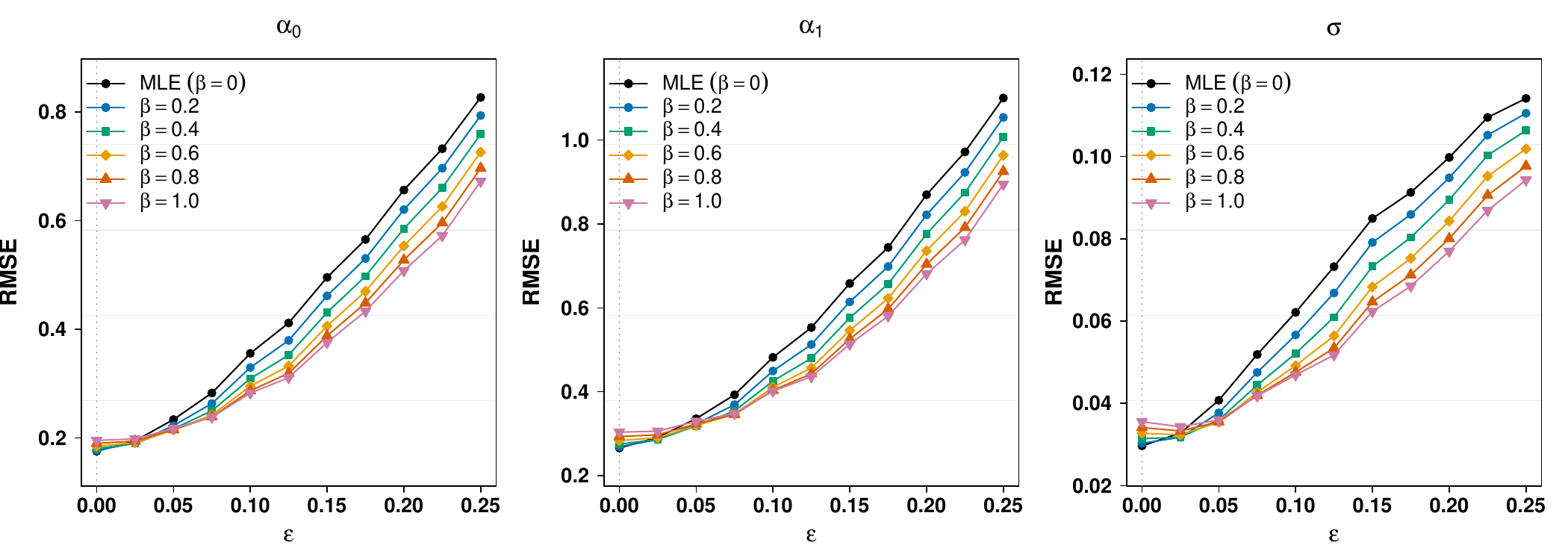}\\[1em]
		{\small Lifetime Characteristics}\\ [0.75em]
		\includegraphics[width=1\linewidth]{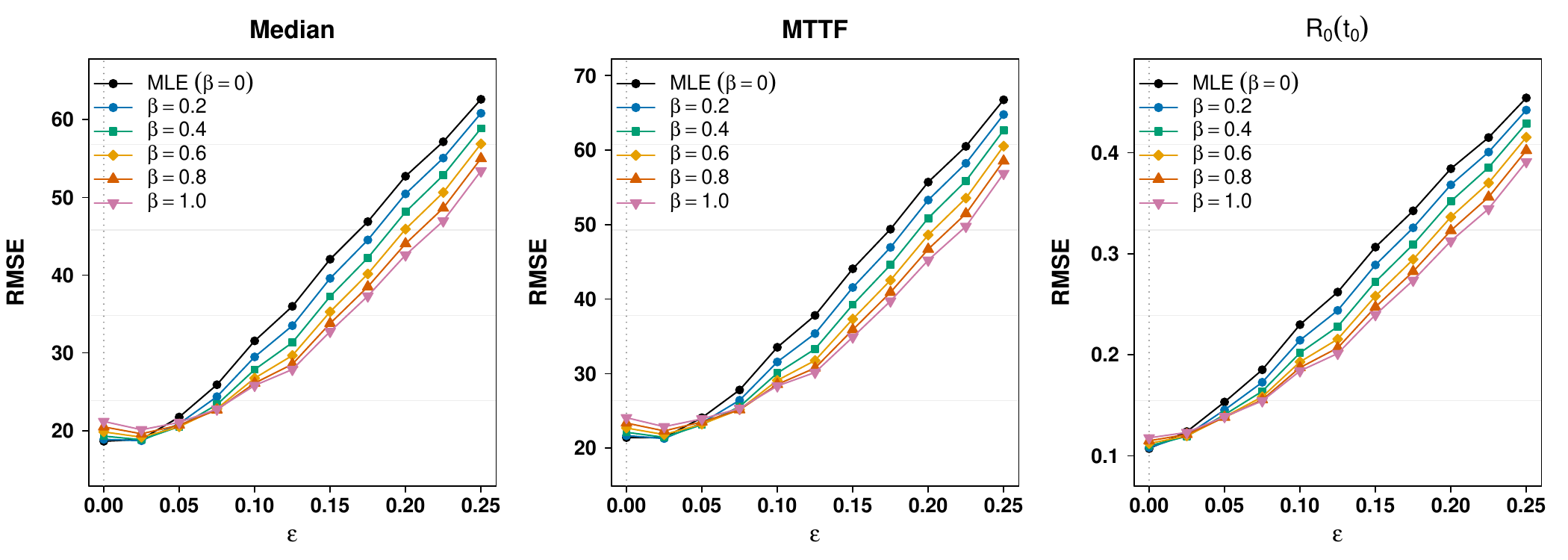}
		\caption{RMSE of the MLE ($\beta=0$) and
			WMDPDE ($\beta \in \{0.2, 0.4, 0.6, 0.8, 1.0\}$) with respect to the contamination proportion $\varepsilon$ for the model parameters
			$\alpha_0$, $\alpha_1$, $\sigma$ (top row) and the lifetime characteristics, median $t_{0.5,0}$, mean time to failure
			$\mathrm{MTTF}_0$, and reliability $R_0(t_0)$, at normal operating cyclic-stress condition (bottom row).}
		\label{fig:rmse_par_char}
	\end{figure}	    
    Figure \ref{fig:rmse_par_char} illustrates the efficiency and robustness trade-off across the wider range of $\varepsilon$. The crossover structure is clearly visible for every quantity shown in the panels. The MLE starts with the lowest RMSE at $\varepsilon = 0$ but increases rapidly as contamination grows, while estimators with larger $\beta$ rise more slowly. The results suggest that even a small positive $\beta$ provides meaningful robustness with negligible loss of efficiency in practice.

    The simulation study confirms the practical advantages of the WMDPDE under data contamination. We now illustrate the full methodology on a realistic engineering application.
	
	\section{Implementation on the air-conditioner data}
	\label{sec:real_analysis}
    
	We illustrate the proposed methodology using the automotive air-conditioner evaporator application introduced in Section \ref{sec:industry_app}. We use preliminary data obtained under an accelerated cyclic-stress loading, originally presented by \cite{kim2021optimal}, to simulate a dataset under the proposed CyALT experiment. The analysis follows three steps: (1) preliminary parameter estimation from the initial experiment, (2) design of a CyALT experiment, and (3) application of the WMDPDE to the resulting simulated interval-monitored data.
	
	\subsection{Preliminary parameter estimation}
	\label{sec:prelim_est}
	
	The evaporator units operate under a cyclic refrigerant pressure load in which the compressor alternates between a floor pressure and a ceiling pressure, as shown in Figure \ref{fig:ac}. Each cycle lasts approximately $20$ seconds, with the ceiling pressure sustained for half the cycle duration, giving $\tau = 0.5$. The use condition has floor pressure $v_{0F} = 0.1$~MPa and ceiling pressure $v_{0C} = 0.25$~MPa. Since the stress variable is refrigerant pressure, the inverse power law model is the natural stress-life relationship, corresponding to $g(v) = \ln(v)$ in equation \eqref{eq:stress_life_original}. Applying the standardization from equation \eqref{eq:standardisation} gives
	\begin{equation}
		s_{im} = \frac{\ln(v_{im}) - \ln(v_{0F})}
		{\ln(v_{hC}) - \ln(v_{0F})} \in [0,1],
		\label{eq:standardise_app}
	\end{equation}
	yielding the standardized use stress levels $s_{0F} = 0.00$ and $s_{0C} = 0.27$.
	
	A preliminary reliability experiment was conducted on $N = 20$ evaporator units under a single accelerated cyclic-stress condition
	with floor level $s_{hF} = 0.50$ and ceiling level $s_{hC} = 1.00$, corresponding to the floor and maximum test pressure as  $v_{hF} = 0.47$~MPa and $v_{hC} = 3.0$~MPa, respectively, with censoring time $t_c = 50{,}000$ cycles. Out of the $N = 20$ units placed
	on test, $18$ failed before $t_c$ where the two surviving units were right-censored at $50{,}000$ cycles. The observed failure times
	(in cycles) are listed in Table \ref{table1}.
	
	\begin{table}[htbp]
		\centering
		\caption{Observed failure times (in cycles) from the preliminary experiment ($N = 20$, $s_{hF} = 0.50$, 
			$s_{hC} = 1.00$, $t_c = 50{,}000$ cycles).}
		\label{table1}
		\small
        \begin{tabularx}{\textwidth}{c *{8}{>{\centering\arraybackslash}X}}
			\toprule
			\multicolumn{9}{c}{Failure time} \\
			\midrule
			26763  & 31959  & 32887  & 	33069  & 34019  & 34924 & 36754  & 37054  & 37385  \\
			38045  & 41033  & 41755  &  42333  & 42818  & 44638  &  44867  & 48364  & 49767  \\
			\bottomrule
		\end{tabularx}
	\end{table}
	
	We fit a lognormal distribution to the $18$ observed failure times given in Table \ref{tab:observed_data} using the maximum likelihood estimation (\texttt{fitdistr} in R) that yields $\sigma = 0.156$. We adopt it as the pre-estimate of the shape parameter for the main experiment. The pre-estimates of the remaining parameters $\alpha_0$ and $\alpha_1$ are also uniquely obtainable given the proportion of failures at any two cyclic-stress conditions. Since we have two further inputs: the observed failure proportion $p_h = 0.90 (= 18/20 )$ under the preliminary cyclic-stress condition within $t_c$, and the experimenter's guess according to \citep{kim2021optimal} for the probability of failure at the use condition $p_u = 0.001$ within the same period.
	
	The location parameters at the preliminary and use conditions are recovered by inverting the lognormal CDF in equation \eqref{p_i_l+1}
	\begin{align}
		\mu_h &= \ln t_c - \sigma\,\Phi^{-1}(p_h) = 10.620,
		\label{eq:mu_h}\\
		\mu_u &= \ln t_c - \sigma\,\Phi^{-1}(p_u) = 11.302.
		\label{eq:mu_u}
	\end{align}
	Equating these to the expressions $\mu_m = -\ln\!\left[\tau\,e^{-\alpha_0 - \alpha_1 s_{mC}} + (1-\tau)\,e^{-\alpha_0 - \alpha_1 s_{mF}}\right]$ for $m \in \{h, u\}$ and solving for $\alpha_0$ and $\alpha_1$ yields the remaining pre-estimates as  $11.457$ and $ -1.068$, respectively. Hence, the resulting parameter vector $\boldsymbol{\theta}_0 = (11.457,\,-1.068,\,0.156)^\top$ serves as the true parameter vector for the subsequent analysis.
	
	\subsection{Simulated data under CyALT experiment}
	\label{subsec:real_cyalt_design}
	
	Using the parameter vector $\boldsymbol{\theta}_0$ estimated in Section \ref{sec:prelim_est}, we design a CyALT experiment with $R = 2$
	cyclic-stress conditions. Both conditions share a common floor level $s_F = 0.30$, which is just above the use-condition ceiling
	$s_{0C} = 0.27$, ensuring that even the minimum test stress level exceeds normal operating levels. The ceiling levels are set to
	$s_{1C} = 0.70$ and $s_{2C} = 1.00$, with cyclic fraction $\tau = 0.50$.
	
	A total of $K = 200$ units are allocated across the two groups as $K_1 = 140$ and $K_2 = 60$, following the optimal proportions
	$\pi_1 = 0.70$ and $\pi_2 = 0.30$. The test runs to a censoring time of $t_c = 65{,}000$ cycles, with $L = 6$
	pre-fixed inspection times at $IT_1 = 25{,}000$, $IT_2 = 35{,}000$, $IT_3 = 45{,}000$, $IT_4 = 50{,}000$, $IT_5 = 60{,}000$, and
	$IT_6 = t_c = 65{,}000$ cycles.     
	\begin{figure}[h]
		\centering
		\resizebox{0.9\textwidth}{!}{%

\begin{tikzpicture}[>=stealth, scale=1.0]
	
	%
	%
	
	\fill[gray!13] (0,0.0) rectangle (12,1.62);
	
	\foreach \y in {0.0, 1.62, 1.80, 4.20, 6.0}
	\draw[gray!35, thin, dashed] (0,\y) -- (12,\y);
	
	\draw[gray!55, thick]
	(0,0)--(0,1.62)--(1,1.62)--(1,0)--(2,0)
	--(2,1.62)--(3,1.62)--(3,0)--(4,0)
	--(4,1.62)--(5,1.62)--(5,0)--(6,0)
	--(6,1.62)--(7,1.62)--(7,0)--(8,0)
	--(8,1.62)--(9,1.62)--(9,0)--(10,0)
	--(10,1.62)--(11,1.62)--(11,0)--(12,0);
	
	\draw[blue!65!black, very thick]
	(0,1.80)--(0,4.20)--(1,4.20)--(1,1.80)--(2,1.80)
	--(2,4.20)--(3,4.20)--(3,1.80)--(4,1.80)
	--(4,4.20)--(5,4.20)--(5,1.80)--(6,1.80)
	--(6,4.20)--(7,4.20)--(7,1.80)--(8,1.80)
	--(8,4.20)--(9,4.20)--(9,1.80)--(10,1.80)
	--(10,4.20)--(11,4.20)--(11,1.80)--(12,1.80);
	
	\draw[red!65!black, very thick]
	(0,1.80)--(0,6.0)--(1,6.0)--(1,1.80)--(2,1.80)
	--(2,6.0)--(3,6.0)--(3,1.80)--(4,1.80)
	--(4,6.0)--(5,6.0)--(5,1.80)--(6,1.80)
	--(6,6.0)--(7,6.0)--(7,1.80)--(8,1.80)
	--(8,6.0)--(9,6.0)--(9,1.80)--(10,1.80)
	--(10,6.0)--(11,6.0)--(11,1.80)--(12,1.80);
	
	\node[font=\large, gray!65] at (12.65, 0.81) {$\cdots$};
	\node[font=\large, gray!65] at (12.65, 3.00) {$\cdots$};
	\node[font=\large, gray!65] at (12.65, 4.50) {$\cdots$};
	
	\draw[thick, ->] (-0.3,0) -- (13.4,0)
	node[right, font=\small] {Cycle};
	\draw[thick, ->] (-0.3,0) -- (-0.3,7.0)
	node[above, font=\small] {Stress $s$};
	
	\draw (-0.45, 0.0)  -- (-0.3, 0.0);
	\draw (-0.45, 1.62) -- (-0.3, 1.62);
	\draw (-0.45, 1.80) -- (-0.3, 1.80);
	\draw (-0.45, 4.20) -- (-0.3, 4.20);
	\draw (-0.45, 6.0)  -- (-0.3, 6.0);
	
	\node[left, font=\small]               at (-0.48, 0.0)  {$0.00$};
	\node[left, font=\scriptsize, yshift= 4pt] at (-0.48, 1.62) {$0.27$};
	\node[left, font=\scriptsize, yshift=-4pt] at (-0.48, 1.80) {$0.30$};
	\node[left, font=\small]               at (-0.48, 4.20) {$0.70$};
	\node[left, font=\small]               at (-0.48, 6.0)  {$1.00$};
	
	\draw[<->, gray!60, thin] (0,-0.55)--(1,-0.55)
	node[midway, below, font=\footnotesize] {$\tau T$};
	\draw[<->, gray!60, thin] (1,-0.55)--(2,-0.55)
	node[midway, below, font=\footnotesize] {$(1{-}\tau)T$};
	\draw[gray!40, thin, dotted] (1,-0.80)--(1,6.5);
	\draw[decorate,
	decoration={brace, mirror, amplitude=5pt, raise=3pt}]
	(0,-0.85)--(2,-0.85)
	node[midway, below=7pt, font=\footnotesize] {One cycle of $20$ seconds};
	
	\foreach \xx in {4.615, 6.462, 8.308, 9.231, 11.077, 12.0}
	\draw[gray!45, thin, dashed] (\xx,0)--(\xx,6.5);
	
	\draw[thick, ->] (0,-2.0)--(13.5,-2.0)
	node[right, font=\small] {$t$ (${\times}10^3$ cycles)};
	\node[below, font=\footnotesize] at (0,-2.0) {$0$};
	
	\foreach \xx/\lbl/\t in {
		4.615/{$IT_1$}/{25},
		6.462/{$IT_2$}/{35},
		8.308/{$IT_3$}/{45},
		9.231/{$IT_4$}/{50},
		11.077/{$IT_5$}/{60},
		12.0/{$IT_6$}/{65}}
	{
		\draw[thick] (\xx,-1.88)--(\xx,-2.12);
		\node[above, font=\footnotesize] at (\xx,-1.88) {\lbl};
		\node[below, font=\footnotesize] at (\xx,-2.12) {\t};
	}
	
	\node[right, font=\footnotesize, gray!70]       at (12.0, 0.0)
	{$s_{0F}$};
	\node[right, font=\scriptsize,   gray!70, yshift= 4pt] at (12.0, 1.62)
	{$s_{0C}$};
	\node[right, font=\scriptsize,   black!60, yshift=-4pt] at (12.0, 1.80)
	{$s_F$};
	\node[right, font=\footnotesize, blue!65!black] at (12.0, 4.20)
	{$s_{1C} \rightarrow \text{Cond.~1:}\;K_1{=}140$};
	\node[right, font=\footnotesize, red!65!black]  at (12.0, 6.0)
	{$s_{2C} \rightarrow \text{Cond.~2:}\;K_2{=}60$};
	
\end{tikzpicture}   
		}
		\caption{Cyclic-stress ALT design for the real data analysis. The grey band shows the use-condition stress cycling between
			$s_{0F} = 0.00$ and $s_{0C} = 0.27$. Vertical dashed lines mark the $L = 6$ inspection times $IT_j \in \{25, 35, 45, 50, 60, 65\}$ in $1000$ cycles, $ j=1,\ldots,6$ with the full test timeline (bottom).}
		\label{fig:real_design}
	\end{figure}

    The stress profiles for both cyclic-stress conditions and the inspection time schedule are illustrated in Figure \ref{fig:real_design}.
    
	\begin{table}[htbp]
		\centering
        \small
		\caption{Interval-censored failure counts from the simulated CyALT experiment with $K_1 = 140$, $K_2 = 60$, and $t_c = 65{,}000$ cycles.}
		\label{tab:observed_data}
		\begin{tabular}{ccc}
			\toprule
			Inspection interval 
			& Group 1 
			& Group 2 \\
			(cycles)
			& ($s_{1C} = 0.70$)
			& ($s_{2C} = 1.00$) \\
			\midrule
			$(0,\;25{,}000]$          &  0  &  0  \\
			$(25{,}000,\;35{,}000]$   &  1  &  3  \\
			$(35{,}000,\;45{,}000]$   & 13  & 30  \\
			$(45{,}000,\;50{,}000]$   & 25  & 12  \\
			$(50{,}000,\;60{,}000]$   & 62  & 14  \\
			$(60{,}000,\;65{,}000]$   & 21  &  1  \\
			Survivors ($>65{,}000$)   & 18  &  0  \\
			\midrule
			Total                      & 140 & 60  \\
			\bottomrule
		\end{tabular}
	\end{table}
	
	Exact lognormal failure times are generated from $\boldsymbol{\theta}_0$ setting the seed as $125$ and binned into the inspection intervals to produce the interval-censored counts shown in Table \ref{tab:observed_data}. Group~1, under ceiling stress $s_{1C} = 0.70$, concentrates most failures in the interval $(50{,}000,\,60{,}000]$ with $18$ survivors beyond $t_c$. Group~2, under the higher ceiling stress $s_{2C} = 1.00$, fails almost entirely before $45{,}000$ cycles with no survivors. This pattern is consistent with the acceleration effect at the higher ceiling stress.
	
	\subsection{WMDPDE estimates and confidence intervals}
	\label{subsec:sim_WMDPDE_estimation}
	
	We are going to fit now the WMDPDE for $\beta \in \{0,\,0.2,\,0.4,\,0.6,\,0.8,\,1.0\}$ by minimizing $H_\beta(\boldsymbol{\theta})$ in equation \eqref{eq:DPD_full_reduced} via Nelder-Mead optimization using the \texttt{optimx} package in R. Parameter estimates with $95\%$ direct asymptotic confidence intervals and $95\%$ BCa bootstrap confidence intervals based on $500$ parametric replicates are reported in Table \ref{tab:param_estimates}.
    
    \noindent The estimates are stable across the full range of $\beta$. The intercept $\widehat{\alpha}_{0,\beta}$ varies from $11.466$ to $11.483$, and the slope $\widehat{\alpha}_{1,\beta}$ from $-1.069$ to $-1.095$, a range of less than $0.03$ units. All confidence intervals contain the true parameter values. The shape parameter $\widehat{\sigma}_\beta$ remains essentially stable, ranging from
	$0.155$ to $0.154$ across $\beta$, close to the pre-estimated value $\sigma = 0.156$. The mild widening of both asymptotic and bootstrap confidence intervals with $\beta$ reflects the small efficiency cost of robustification under clean data, consistent with the general properties of the WMDPDE family.
	
	Table \ref{tab:lifetime_chars} reports three lifetime characteristics at the use condition ($s_{0F} = 0.00$, $s_{0C} = 0.27$): the median lifetime $t_{0.5,0}$, the mean time to failure $\mathrm{MTTF}_0 = \exp(\mu_0 + \sigma^2/2)$, and the
	reliability at mission time $t_0 = 75{,}000$ cycles. For each $\beta$ we report the point estimate, the $95\%$ direct and transformed (log-transformed (median, MTTF) or logit-transformed (reliability)) asymptotic confidence interval, and the $95\%$ bias-corrected and accelerated (BCa) bootstrap confidence interval based on $500$ parametric replicates.
	
	\begin{table}[htbp]
		\centering
        \small
		\caption{WMDPDE estimates of model parameters and $95\%$ confidence intervals from the simulated CyALT evaporator dataset.}
		\label{tab:param_estimates}
		\setlength{\tabcolsep}{5pt}
        \begin{tabularx}{\textwidth}{c *{4}{>{\centering\arraybackslash}X}}
			\toprule
			Parameter & $\beta$ & Estimate
			& Direct 
			& BCa  \\
			\midrule
			$\alpha_0$ & \multicolumn{4}{l}{%
				\hfill True value, $\alpha_{0,0}= 11.4565$} \\[3pt]
			& $0$   & $11.4663$ & $(11.3337,\;11.5990)$
			& $(11.3209,\;11.5710)$ \\
			& $0.2$ & $11.4699$ & $(11.3363,\;11.6034)$
			& $(11.3364,\;11.6084)$ \\
			& $0.4$ & $11.4728$ & $(11.3355,\;11.6101)$
			& $(11.3212,\;11.5927)$ \\
			& $0.6$ & $11.4758$ & $(11.3336,\;11.6180)$
			& $(11.3171,\;11.6150)$ \\
			& $0.8$ & $11.4791$ & $(11.3318,\;11.6264)$
			& $(11.2996,\;11.6007)$ \\
			& $1.0$ & $11.4826$ & $(11.3301,\;11.6351)$
			& $(11.3031,\;11.6249)$ \\
			\midrule
			$\alpha_1$ & \multicolumn{4}{l}{%
				\hfill True value, $\alpha_{1,0}= -1.0684$} \\[3pt]
			& $0$   & $-1.0690$ & $(-1.2831,\;-0.8549)$
			& $(-1.2701,\;-0.8533)$ \\
			& $0.2$ & $-1.0727$ & $(-1.2889,\;-0.8565)$
			& $(-1.2818,\;-0.8544)$ \\
			& $0.4$ & $-1.0769$ & $(-1.3006,\;-0.8532)$
			& $(-1.2767,\;-0.8709)$ \\
			& $0.6$ & $-1.0821$ & $(-1.3154,\;-0.8488)$
			& $(-1.3215,\;-0.8204)$ \\
			& $0.8$ & $-1.0882$ & $(-1.3315,\;-0.8449)$
			& $(-1.3031,\;-0.8436)$ \\
			& $1.0$ & $-1.0948$ & $(-1.3481,\;-0.8415)$
			& $(-1.3287,\;-0.8051)$ \\
			\midrule
			$\sigma$ & \multicolumn{4}{l}{%
				\hfill True value, $\sigma_{0}= 0.1560$} \\[3pt]
			& $0$   & $0.1547$ & $(0.1367,\;0.1727)$
			& $(0.1363,\;0.1736)$ \\
			& $0.2$ & $0.1537$ & $(0.1353,\;0.1721)$
			& $(0.1353,\;0.1749)$ \\
			& $0.4$ & $0.1535$ & $(0.1344,\;0.1727)$
			& $(0.1365,\;0.1757)$ \\
			& $0.6$ & $0.1538$ & $(0.1338,\;0.1737)$
			& $(0.1364,\;0.1771)$ \\
			& $0.8$ & $0.1541$ & $(0.1333,\;0.1748)$
			& $(0.1359,\;0.1747)$ \\
			& $1.0$ & $0.1544$ & $(0.1327,\;0.1760)$
			& $(0.1316,\;0.1765)$ \\
			\bottomrule
		\end{tabularx}
	\end{table}
	\begin{table}[htbp]
		\centering
        \small
		\caption{WMDPDE estimates of lifetime characteristics at the use condition with $95\%$ direct, transformed and BCa bootstrap confidence intervals.}
		\label{tab:lifetime_chars}
		\setlength{\tabcolsep}{4pt}
        \begin{tabularx}{\textwidth}{c *{4}{>{\centering\arraybackslash}X}}
			\toprule
			$\beta$ & Estimate
			& Direct 95\%~CI
			& Trans.\ 95\%~CI
			& BCa 95\%~CI \\
			\midrule
			& \multicolumn{4}{l}{%
				Median lifetime, $t_{0.5,0}$ (cycles) \hfill True value $= 80{,}971$} \\[3pt]
			$0$   & $81{,}766$ & $(73{,}570,\;89{,}963)$
			& $(73{,}968,\;90{,}387)$
			& $(73{,}373,\;90{,}446)$ \\
			$0.2$ & $82{,}011$ & $(73{,}742,\;90{,}279)$
			& $(74{,}145,\;90{,}710)$
			& $(74{,}520,\;89{,}213)$ \\
			$0.4$ & $82{,}196$ & $(73{,}695,\;90{,}697)$
			& $(74{,}120,\;91{,}152)$
			& $(73{,}787,\;90{,}634)$ \\
			$0.6$ & $82{,}378$ & $(73{,}581,\;91{,}176)$
			& $(74{,}034,\;91{,}663)$
			& $(72{,}770,\;90{,}210)$ \\
			$0.8$ & $82{,}573$ & $(73{,}459,\;91{,}687)$
			& $(73{,}944,\;92{,}209)$
			& $(74{,}261,\;91{,}940)$ \\
			$1.0$ & $82{,}779$ & $(73{,}344,\;92{,}214)$
			& $(73{,}862,\;92{,}772)$
			& $(74{,}051,\;92{,}544)$ \\
			\midrule
			&\multicolumn{4}{l}{%
				Mean time to failure, $\mathrm{MTTF}_0$ (cycles) \hfill True value $= 81{,}962$} \\[3pt]
			$0$   & $82{,}751$ & $(74{,}441,\;91{,}060)$
			& $(74{,}845,\;91{,}492)$
			& $(74{,}843,\;90{,}339)$ \\
			$0.2$ & $82{,}985$ & $(74{,}608,\;91{,}363)$
			& $(75{,}017,\;91{,}800)$
			& $(74{,}958,\;91{,}764)$ \\
			$0.4$ & $83{,}171$ & $(74{,}564,\;91{,}778)$
			& $(74{,}994,\;92{,}239)$
			& $(75{,}065,\;92{,}230)$ \\
			$0.6$ & $83{,}358$ & $(74{,}453,\;92{,}263)$
			& $(74{,}912,\;92{,}756)$
			& $(74{,}423,\;92{,}666)$ \\
			$0.8$ & $83{,}558$ & $(74{,}333,\;92{,}784)$
			& $(74{,}824,\;93{,}313)$
			& $(72{,}983,\;92{,}463)$ \\
			$1.0$ & $83{,}771$ & $(74{,}217,\;93{,}325)$
			& $(74{,}741,\;93{,}891)$
			& $(74{,}961,\;93{,}368)$ \\
			\midrule
			& \multicolumn{4}{c}{%
				Reliability $R_0(t_0)$ at $t_0 = 75{,}000$ cycles \hfill True value $= 0.6883$} \\[3pt]
			$0$   & $0.7117$ & $(0.4905,\;0.9329)$
			& $(0.4566,\;0.8788)$
			& $(0.4177,\;0.8724)$ \\
			$0.2$ & $0.7195$ & $(0.4980,\;0.9410)$
			& $(0.4613,\;0.8849)$
			& $(0.4760,\;0.8998)$ \\
			$0.4$ & $0.7246$ & $(0.4986,\;0.9507)$
			& $(0.4588,\;0.8910)$
			& $(0.4872,\;0.9054)$ \\
			$0.6$ & $0.7292$ & $(0.4976,\;0.9608)$
			& $(0.4545,\;0.8969)$
			& $(0.4316,\;0.9089)$ \\
			$0.8$ & $0.7338$ & $(0.4969,\;0.9707)$
			& $(0.4505,\;0.9026)$
			& $(0.4349,\;0.9055)$ \\
			$1.0$ & $0.7387$ & $(0.4970,\;0.9804)$
			& $(0.4470,\;0.9081)$
			& $(0.4542,\;0.9210)$ \\
			\bottomrule
		\end{tabularx}
	\end{table}
	
	Several observations are noteworthy when estimating the characteristics. First, all point estimates are close to the true values. The estimates lie slightly above the true values in all cases, reflecting the small upward bias of the WMDPDE under
	clean data as $\beta$ increases. Second, all confidence intervals, both asymptotic and BCa bootstrap, contain the true value for every $\beta$ and every characteristic, confirming that the asymptotic theory of Section \ref{sec:WMDPDE} is valid at $K = 200$. Third, the asymptotic and bootstrap confidence intervals appear to agree closely throughout. This close agreement confirms the adequacy of the asymptotic approximation at the present sample size and supports the use of the simpler asymptotic intervals
	in practice when $K = 200$.
	
	Finally, CI widths increase only mildly with $\beta$, which is expected since the present dataset was generated under clean conditions. For the median, the transformed asymptotic width grows from $16{,}419$ cycles at $\beta = 0$ to $18{,}910$ cycles at $\beta = 1$, an increase of approximately $15\%$ over the full tuning range, a negligible penalty relative to the median lifetime of around $82{,}000$ cycles. Under clean data, estimators across the full range of $\beta$ perform similarly, with the MLE retaining a slight numerical advantage in efficiency. However, this modest loss in precision for larger $\beta$ is the cost of robustness assurance, as the simulation study of Section \ref{sec:sim_analysis} demonstrates; once contamination is introduced, the MLE deteriorates rapidly while estimators with larger $\beta$ remain stable. The small efficiency cost is therefore a worthwhile price to pay for robustness against data contamination.	
    
	\begin{figure}[htbp]
		\centering
		\includegraphics[width=0.9\textwidth]{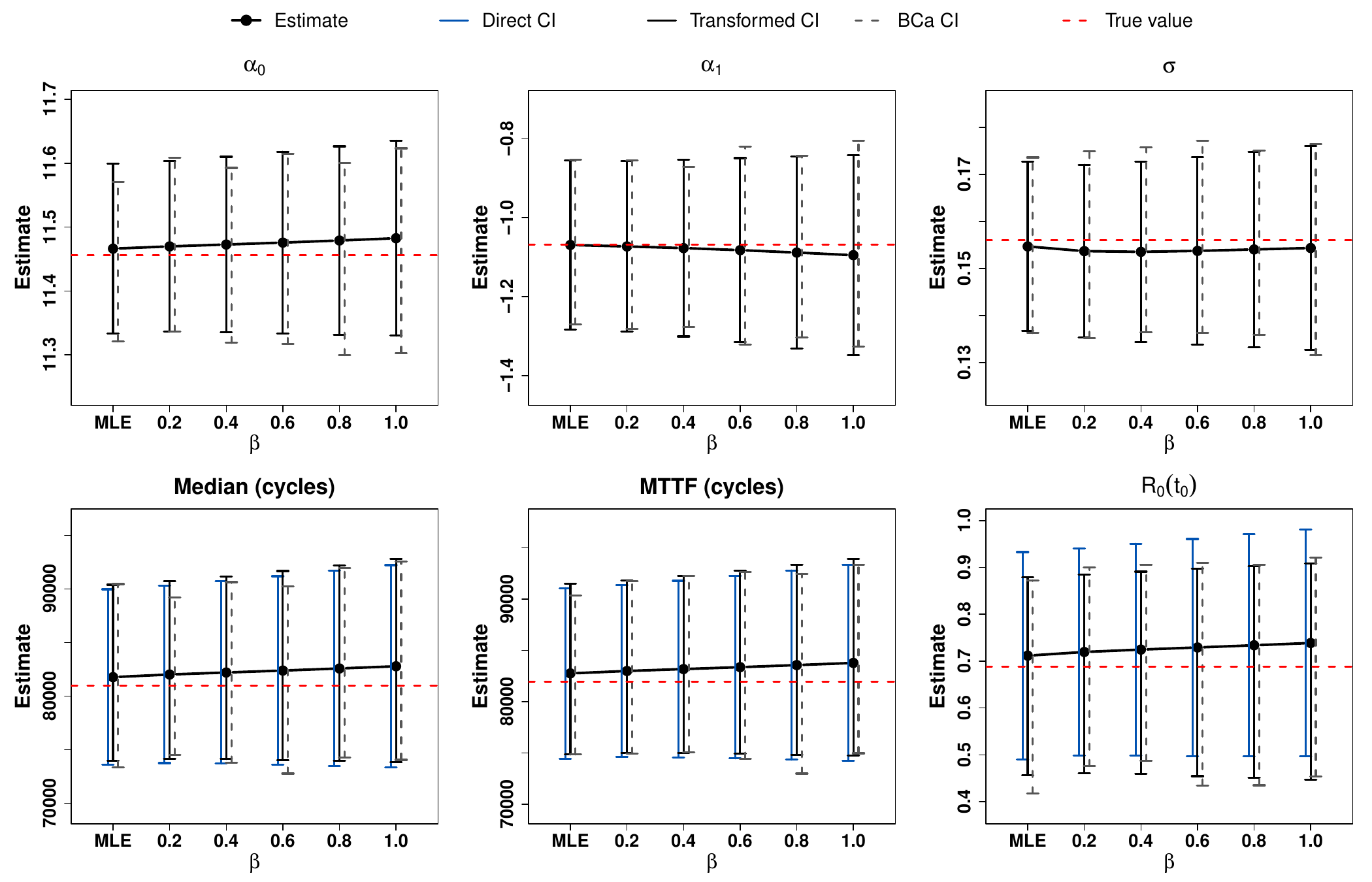}
		\caption{WMDPDE estimates, 95\% direct and transformed asymptotic confidence intervals, and 95\% BCa bootstrap confidence intervals for the model parameters $\alpha_0$, $\alpha_1$, $\sigma$ (top row) and lifetime characteristics at the use cyclic-stress condition (bottom row) for various $\beta$ values.}
		\label{fig:combined}
	\end{figure}
    
    Figure \ref{fig:combined} displays all six WMDPDE estimates with their $95\%$ asymptotic and BCa bootstrap confidence intervals while varying the tuning parameter $\beta$. The red dashed lines mark the true values. Three features are immediately apparent. First, all estimates are stable across the full range of $\beta$, with no systematic deterioration in point accuracy or interval width as robustness increases. Second, all asymptotic and bootstrap confidence intervals contain the true value in every panel, confirming the theoretical coverage guarantees derived in Section \ref{sec:WMDPDE}. Third, the asymptotic and bootstrap intervals are nearly indistinguishable, indicating that the covariance matrix $\widehat{\boldsymbol{\Sigma}}_\beta$ provides an accurate approximation to the sampling variability of the WMDPDE at $K = 200$. Therefore, the results here confirm the reliability and practical utility of the proposed methodology on a realistic engineering dataset.
	\FloatBarrier
	\section{Conclusion}
	\label{sec:conclusion}
    
    This paper has developed and illustrated robust inferential procedures for cyclic-stress accelerated life testing with lognormal lifetimes under interval monitoring and Type-I censoring. The WMDPDE, obtained by minimizing a weighted density power divergence over the $R$ independent stress groups, provides a flexible and efficient alternative to the MLE that is controlled by a single tuning parameter $\beta$. At $\beta = 0$ the WMDPDE reduces to the MLE, while increasing $\beta$ provides progressively stronger protection against data contamination at the cost of a small and well-controlled 
    efficiency loss. The asymptotic normality of the WMDPDE was developed under mild regularity conditions, enabling the construction of asymptotic confidence intervals for the assumed lognormal-CyALT model parameters and certain key lifetime characteristics, namely median lifetime, mean time to failure, and reliability at a specified mission time, using both the delta-method transformations and parametric BCa bootstrap procedures.
    
    The influence function analysis demonstrates that the WMDPDE exhibits improved robustness relative to the MLE for every $\beta > 0$, with full boundedness achieved for $\beta \geq 1$. The simulation study confirms that the WMDPDE behaves robustly under contamination. Even a small tuning value of $\beta = 0.2$ improves upon the MLE as soon as contamination is present, and larger values of $\beta$ offer stronger protection as contamination grows. The analysis of the automotive air-conditioner evaporator dataset confirms that all three types of confidence interval: direct asymptotic, transformed asymptotic, and BCa bootstrap, agree closely and contain the true parameter and WMDPD estimate values across the complete tuning parameter range, supporting the validity of the asymptotic theory at the sample sizes typical of industrial reliability experiments.
    
    There are several directions that remain open for future work. The present framework assumes lognormal lifetimes; extension to the Weibull distribution and log-location scale family, which is widely used in reliability engineering, is a natural next step. Divergence-based test hypothesis for the proposed WMDPDE in the CyALT setting, which are needed for formal comparison of reliability across stress conditions, are also of direct practical interest. In the end, extension to competing risks into consideration, where units may disfunction due to multiple causes, would also broaden the applicability of the proposed methods.
    
	\bibliographystyle{apalike}
    \bibliography{references}{}
	
	\appendix

\end{document}